\documentclass[journal]{IEEEtran}
\usepackage{mathrsfs}
\usepackage{graphicx, graphics, times, amsfonts, amsmath, amssymb, cite}
\usepackage{amsthm}
\usepackage{mathtools}
\usepackage{color}
\usepackage{hyperref}
\usepackage{multirow}
\usepackage{rotating}
\usepackage[font=small]{caption}
\usepackage{subcaption}
\usepackage{bm}
\let\forallsymb\forall
\input{mysymbol.sty}
\usepackage[utf8]{inputenc}
\usepackage{enumitem}
\usepackage[ruled,linesnumbered]{algorithm2e}
\usepackage[capitalise]{cleveref}

\usepackage{tikz, pgfplots}
\pgfplotsset{compat=1.17}
\pgfplotstableset{col sep=semicolon}
\usepackage{moresize} 
\usepgfplotslibrary{groupplots}
\usepgfplotslibrary{external}
\tikzexternalize

\tikzset{every mark/.append style={scale=1.4, solid}, font=\footnotesize}
\pgfplotsset{
    width=1\textwidth,
    height=5.5cm,
    legend style={
        font=\ssmall ,  
        inner xsep=1pt,
        inner ysep=1pt,
        nodes={inner sep=1pt}},
    legend cell align=left,
    every axis/.append style={line width=0.5pt},
 	every axis plot/.append style={line width=1.25pt},
 	every axis y label/.append style={yshift=-4pt}
}

\setlength{\textfloatsep}{0pt}
\setlength{\floatsep}{0pt}


\newtheorem{corollary}{Corollary}
\newtheorem{theorem}{Theorem}
\newtheorem{proposition}{Proposition}
\newtheorem{problem}{Problem}

\newcommand{\vvec}{{\mathrm{vec}}}

\def\bchkH{{\ensuremath{\mathbf{\check H}}}}

\def\bchks{{\ensuremath{\mathbf{\check s}}}}

	\addtolength{\textheight}{.3cm}
	\addtolength{\textwidth}{.5cm}
	
	\addtolength{\hoffset}{-0.25cm}
	\addtolength{\voffset}{-0.18cm}

\begin{document}

\title{Robust Graph Filter Identification and Graph Denoising from Signal Observations}

\author{Samuel Rey,~\IEEEmembership{Student Member,~IEEE},
        Victor M. Tenorio, 
        and~Antonio G. Marques,~\IEEEmembership{Senior Member,~IEEE} \vspace{-0.4cm} 
\thanks{S. Rey, V. M. Tenorio and A. G. Marques are with the Department
of Signal Theory and Comms., King Juan Carlos University, Madrid, Spain, \{samuel.rey.escudero,victor.tenorio,antonio.garcia.marques\}@urjc.es.}
\thanks{A conference version of this manuscript with some preliminary results was published in~\cite{rey2021robust}. Work in this paper was partially supported by the Spanish Grants SPGRAPH (PID2019-105032GB-I00), FPU17/04520 and FPU20/05554. The authors would like to thank Pablo Espana-Gutierrez for his help reviewing the literature and developing the early versions of the convergence results.}}


\maketitle

\vspace{-1.0cm}
\begin{abstract}
When facing graph signal processing tasks, the workhorse assumption is that the graph describing the support of the signals is known.
However, in many relevant applications \emph{the available graph suffers from observation errors and perturbations}.
As a result, any method relying on the graph topology may yield suboptimal results if those imperfections are ignored.  
Motivated by this, we propose a novel approach for handling perturbations on the links of the graph and apply it to the problem of robust graph filter (GF) identification from input-output observations.
Different from existing works, we formulate a non-convex optimization problem that operates in the vertex domain and jointly performs GF identification and graph denoising.
As a result, on top of learning the desired GF, an estimate of the graph is obtained as a byproduct.
To handle the resulting bi-convex problem, we design an algorithm that blends techniques from alternating optimization and majorization minimization, showing its convergence to a stationary point.
The second part of the paper i) generalizes the design to a robust setup where several GFs 
are jointly estimated, and ii) introduces an alternative algorithmic implementation that reduces the computational complexity. 
Finally, the detrimental influence of the perturbations and the benefits resulting from the robust approach are numerically analyzed over synthetic and real-world datasets, comparing them with other state-of-the-art alternatives. 
\end{abstract}

\begin{IEEEkeywords}
Graph Filter Identification, Graph Denoising, Robust Graph Signal Processing, Graph Perturbations.
\end{IEEEkeywords}

\section{Introduction}
\IEEEPARstart{N}{owadays}, a significant number of datasets are defined over an irregular (heterogeneous) support that can be conveniently represented by a graph. As a result, the data at hand can be readily understood as graph signals (alternatively, network processes) whose structure and properties depend on the topology of the generating graph.
Illustrative examples include measurements from power, communications, social, biological, or financial networks, to name a few~\cite{nodop1998field,kolaczyk2009book,sporns2012book,shuman2013emerging,sandryhaila2014discrete}.
Characterizing and modeling graph and network processes entails a prevalent and relevant task that not only enhances our understanding of the data at hand, but also opens the door to more sophisticated processing and knowledge extraction schemes.
A popular approach within the graph signal processing (GSP) framework~\cite{shuman2013emerging,djuric2018cooperative,iglesias2018demixing,ortega2018graph,marques2020editorial} is to represent the generative process by a \emph{graph filter} (GF) and, then, model the data as graph signals generated by applying a (low-pass, smooth, bandlimited...) GF to a simple (sparse, white, constant...) input. GFs are topology-aware linear operators whose output can be interpreted as the outcome of a network diffusion or spreading process~\cite{sandryhaila2013discrete,segarra2017optimal,liu2018filter}.
GFs can be expressed as polynomials of the graph-shift operator (GSO), a matrix encoding the topology of the graph, and on top of its theoretical interest, the task of GF identification is practically relevant to, e.g., understanding the dynamics of network diffusion processes \cite{segarra2017optimal,segarra2017blind,djuric2018cooperative}, as well as explaining the structure of real-world datasets~\cite{rey2019sampling,zhu2020estimating,he2022detecting}.

Since GSP is a relatively recent area of research, it is not surprising that most GSP works focus on how to harness the graph structure while assuming that the topology of the graph is perfectly known.
Nevertheless, this assumption is unlikely to hold in many practical setups, where graphs suffer from \emph{imperfections and perturbations}.
When networks are given explicitly, perturbations may be due to observational noise and errors (e.g., link failures in power or wireless networks~\cite{isufi2017filtering}).
When in lieu of physical entities, the graphs model (statistical) pairwise relationships among the observed variables, they need to be inferred from the data~\cite{friedman2008sparse,dong2016learning,segarra2017network,buciulea2022learning}.
Since this is a challenging (oftentimes ill-posed) task, the estimated graphs inherit the imperfections (limitations) of the graph learning scheme adopted (e.g., the thresholding operation implemented in correlation networks~\cite{kolaczyk2009book}).
Intuitively, the presence of perturbations hinders any GSP scheme or GSP tool applied to the data. While this is true regardless of the task and hand, it is even more relevant for those involving spectral transforms and GFs, since eigenvectors and high-order matrix polynomials are more sensitive to errors in the matrix codifying the graph.

To address these challenges, this paper 
investigates the problem of estimating a GF from input-output signal pairs assuming that both the signals and the supporting graph have errors.
The proposed approach is formulated in the vertex domain, avoiding the numerical instability of computing large polynomials and, at the same time, bypassing the challenges associated with robust \emph{spectral} graph theory. 
We recast the robust estimation as a joint optimization problem where the GF identification objective is augmented with a graph-denoising regularizer, so that, on top of the desired GF, we also obtain an enhanced estimate of the supporting graph.
The joint formulation leads to a non-convex bi-convex optimization problem, for which a provably-convergent efficient (alternating minimization) algorithm able to find an approximate solution is developed. Furthermore, to address scenarios where multiple GFs are present (e.g., when dealing with vector autoregressive (AR) spatio-temporal processes or in setups where nodes collect multi-feature vector measurements), we generalize our framework so that multiple GFs, all defined over the same graph, are jointly identified. 

Despite their theoretical and practical relevance, the number of robust GSP works is limited, due in part to the challenges emanating from the presence of graph perturbations~\cite{ceci2020graph,segarra2015stability,miettinen2019modelling,ceci2020_semtls}.
In the spectral domain,\cite{ceci2020graph} employs a small perturbation analysis to study the impact of perturbations in the spectrum of the graph Laplacian.
In the vertex domain,~\cite{miettinen2018graph,miettinen2019modelling} postulates a graphon-based perturbation model and analyzes how perturbations affect GFs of order one.
More recently, \cite{ceci2020_semtls} combines structural equation models (SEMs) with total least squares (TLS) to jointly infer the GF and the perturbations when the observed data is explained by a SEM.
A different robust alternative is presented in~\cite{natali2020topology}, where the support of the graph is assumed to be known and the goal is to estimate the weights of the network topology and the coefficients of the GF.
The resultant problem is non-convex and the authors adopt a sequential convex programming (SCP) approach to solve it.
Finally, the presence of perturbations has also been considered in non-linear GSP tasks.
An alternative definition of GFs robust to perturbations is proposed in~\cite{tenorio2021robust}, and the transferability of GFs when employed in graph neural networks is studied in~\cite{levie2019transferability,levie2021transferability,ruiz2021graph}.

\vspace{2mm}
\noindent\textbf{Contributions and outline.}
We close the section by summarizing the organization and contributions of the manuscript.
After reviewing preliminary GSP concepts in Sec.~\ref{S:preliminaries}, we analyze the influence of edge perturbations in polynomial GFs and state the robust GF identification problem in Sec.~\ref{S:perturbed_graph_filters}.
After that, our main contributions are:
\begin{enumerate}
    \item We formulate a non-convex optimization problem to jointly estimate the graph and the GF, develop an alternating optimization algorithm to solve it and prove its convergence to a stationary point (Sec.~\ref{S:rfi}).
    \item We consider a generalization where several GFs are jointly estimated by exploiting the fact that they are polynomials of the same GSO (Sec.~\ref{S:rfi_joint}).
    \item We propose an efficient implementation of the GF identification algorithm to handle graphs with a large number of nodes (Sec.~\ref{S:efficient_rfi}). 
\end{enumerate}
The effectiveness of the proposed algorithms is evaluated numerically in Sec.~\ref{S:experiments}, and some concluding remarks are provided in Sec.~\ref{S:conclusion}. Last but not least, while we focus on GF identification from input-output pairs, the approach put forth in this paper can be generalized to other GSP tasks, which is a research path we plan to pursue in the near future.

\section{GSP preliminaries}\label{S:preliminaries}
\vspace{1mm}
\noindent\textbf{Graphs and graph signals.}
Consider a directed graph $\ccalG=(\ccalV, \ccalE)$ formed by the set of $N$ nodes (vertices) collected in $\ccalV$ and the set of edges $\ccalE \subset \ccalV\times \ccalV$, such that $(i,j) \in \ccalE$ if node $i$ is connected to node $j$.
The topology of $\ccalG$ can be represented by the adjacency matrix $\bbA \in \reals^{N \times N}$, a sparse matrix whose elements $A_{ij}$ are non-zero if and only if $(i,j) \in \ccalE$.
When $\ccalG$ is weighted, the entries $A_{ij}$ capture the strength of the link between nodes $i$ and $j$.
Otherwise, the elements of $\bbA$ are binary.
Together with the graph $\ccalG$, we focus on modeling the data as signals defined on the nodes in $\ccalV$.
Formally, a (nodal) \emph{graph signal} is a function from the vertex set to the real field $x:\ccalV \to \reals$, which can be alternatively represented as an $N$-dimensional vector $\bbx \in \reals^N$ whose $i$-th entry $x_i$ denotes the value of the signal at node $i$.
The foundational assumption of GSP is that the properties of the $\bbx$ and the topology of $\ccalG$ are related.
As a simple example, if $\ccalG$ captures the similarity between nodes and $A_{ij}$ is high, then the values of $x_i$ and $x_j$ are expected to be akin to each other. 

\vspace{2mm}
\noindent\textbf{Graph-shift operator (GSO).}
The GSO associated with a graph $\ccalG$ of $N$ nodes is as a generic matrix $\bbS \in \reals^{N \times N}$ that: i) captures the topology of the underlying graph and ii) represents a \emph{local} and \emph{linear} transformation that can be applied to graph signals defined over $\ccalG$. The entries of $\bbS$ satisfy $S_{ij} \neq 0$ only if $(i,j) \in \ccalE$ or $i=j$ and, as a result, the application of the GSO to a graph signal involves mixing values among one-hop neighborhoods. Two typical choices for the GSO are the adjacency matrix $\bbA$ and the graph Laplacian $\bbL :=\! \diag(\bbA\textbf{1}) \!-\! \bbA$~\cite{shuman2013emerging,djuric2018cooperative}, where $\diag(\cdot)$ transforms a vector into a diagonal matrix and $\textbf{1}$ is the vector of all ones.
We assume that $\bbS$ is diagonalized as $\bbS = \bbV\diag(\bblambda)\bbV^{-1}$, where $\bbV$ is an $N \times N$ matrix collecting the eigenvectors of $\bbS$, and vector $\bblambda$ collects its eigenvalues.
The matrix $\bbV^{-1}$ is commonly adopted as the Graph Fourier Transform (GFT) for graph signals with $\tbx = \bbV^{-1}\bbx$ denoting the graph frequency representation of $\bbx$~\cite{sandryhaila2014discrete}.

\vspace{2mm}
\noindent\textbf{Graph filtering.}
GFs are topology-aware operators whose inputs and outputs are graph signals. More specifically, GFs implement a linear transformation that can be written as a polynomial of $\bbS$
\begin{equation}\label{eq:graph_filter}
     \bbH = \sum_{r=0}^{N-1} h_r\bbS^r =\bbV\diag(\bbPsi\bbh)\bbV^{-1} = \bbV\diag(\tbh)\bbV^{-1},
\end{equation}
where $\bbh = [h_0,...,h_{N-1}]$ is the vector collecting the GF coefficients. The $N \times N$ Vandermonde matrix $\bbPsi$ defined as $\Psi_{ij} := \lambda_{i}^{j-1}$ represents the GFT for GFs, and thus, $\tbh = \bbPsi\bbh$ is the vector of size $N$ representing the frequency response of $\bbH$~\cite{sandryhaila2014discrete,segarra2017optimal}.
Since $\bbS^r$ encodes the $r$-hop neighborhood of the graph, a graph signal given by $\bby = \sum_{r=0}^{N-1} h_r\bbS^r\bbx = \bbH\bbx$ can be interpreted as a version of the input signal $\bbx$ diffused across $N\!-\!1$ neighborhoods with $h_r$ being the coefficients of the linear combination~\cite{segarra2017optimal}. Moreover, there are scenarios where $h_r=0$ for $r\geq R$. In those cases, the order of the filter is $R$ and, if more convenient, $\bbh$ and $\bbPsi$ can be redefined so that the number of elements of $\bbh$ (columns of $\bbPsi$) is $R$ in lieu of $N$.

\vspace{2mm}
\noindent\textbf{Graph stationarity.}
A zero-mean \emph{random} graph signal $\bbx$ is said to be stationary on $\ccalG$ if its covariance matrix $\bbC_\bbx = \mathbb{E}[\bbx\bbx^\top]$ is a positive-semidefinite polynomial of the GSO \cite{marques2017stationary}.
A common example of stationary graph signals arises when $\bbx$ is the output of a linear graph diffusion process whose input (initial condition) is a white signal $\bbw$ and whose diffusion dynamics can be accurately represented by a GF. 
Mathematically, if we have that $\bbx = \bbH\bbw$ with $\bbH$ being a GF [cf. \eqref{eq:graph_filter}] and $\bbC_\bbw = \mathbb{E}[\bbw\bbw^\top] = \bbI$, it follows that the covariance of $\bbx$ is $\bbC_\bbx = \bbH\bbH^\top=\sum_{r=0,r'=0}^{N-1,N-1}h_r h_{r'} \bbS^{r+r'}$. Since the latter is a polynomial of the GSO, it follows that $\bbx$ is stationary on $\ccalG$.

\subsection{GF identification from input-output pairs}\label{S:graph_filter_id}
In the context of linear operators, let us consider that we observe $M$ input and output pairs $\bbX := [\bbx_1,...,\bbx_M]$ and $\bbY := [\bby_1,...,\bby_M]$ whose relation is given by 
\begin{equation}\label{eq:observation_model}
    \bbY = \bbH\bbX + \bbW,
\end{equation}
with $\bbW$ being a zero-mean random matrix (typically assumed to have i.i.d. entries) that accounts for noisy measurements and model inaccuracies. Leveraging \eqref{eq:observation_model}, the GF identification task amounts to using the input-output pairs to estimate $\bbH$ under the model in \eqref{eq:graph_filter}, which, if the GSO $\bbS$ is known, boils down to estimating the GF coefficients collected in $\bbh \in \reals^N$.

Hence, we can approach the GF identification task in the node domain by solving the convex problem
\begin{equation}\label{eq:fi_opt}
    \min_\bbh \big\|\bbY - \sum_{r=0}^{N-1}h_r\bbS^r\bbX \big\|_F^2.
\end{equation}
Leveraging the frequency definition of GFs in \eqref{eq:graph_filter}, we use the GFT matrices $\bbV^{-1}$ and $\bbPsi$ to rewrite the least-squares (LS) cost in \eqref{eq:fi_opt} and obtain its (closed-form) solution as
\begin{equation}\label{eq:solving_fi}
    \!\!\!\hbh \!=\! \argmin_\bbh \|\vvec(\bbY) \!-\! (\!(\bbV^{-1}\bbX)^\top \!\!\odot\! \bbV)\bbPsi\bbh \|_2^2 \!=\! \bbTheta^{\dagger}\vvec(\bbY),
\end{equation}
where $\vvec(\cdot)$ denotes the vectorization operation, $\bbV^{-1}\bbX$ is the frequency representation of the input signals, $\odot$ denotes the Khatri–Rao product, $\bbPsi$ is the GFT Vandermonde matrix, $\bbTheta:=((\bbV^{-1}\bbX)^\top \odot \bbV)\bbPsi$, and $^{\dagger}$ is the pseudoinverse operator.

From \eqref{eq:solving_fi} we observe that estimating $\bbH$ is straightforward under the assumptions of: i) $\bbTheta$ being full rank (i.e., the inputs are sufficiently rich) and ii) $\bbS$ being perfectly known.
However, the (critical) assumption in ii) does not hold in most practical settings. The remainder of the paper approaches the GF identification problem assuming imperfect knowledge of the GSO.

\section{GF identification with imperfect graph knowledge}\label{S:perturbed_graph_filters}
This section introduces and discusses the problem of estimating a GF $\bbH=\sum_{r=0}^{N-1}h_r\bbS^r$ from noisy input-output signal pairs $(\bbX\in \reals^{N \times M}, \bbY \in \reals^{N \times M})$ assuming that we have access to an \emph{imperfect} GSO $\barbS \in \reals^{N \times N}$, which can be modeled as 
\begin{equation}\label{eq:additive_GSO_perturbation_model}
\barbS = \bbS + \bbDelta,
\end{equation}
where $\bbS \in \reals^{N \times N}$ represents the true GSO and $\bbDelta \in \reals^{N \times N}$ is a \emph{perturbation matrix}. 
Before discussing models for the perturbation matrix, we find illustrative to demonstrate the impact of $\bbDelta$ on the GSP problem at hand.

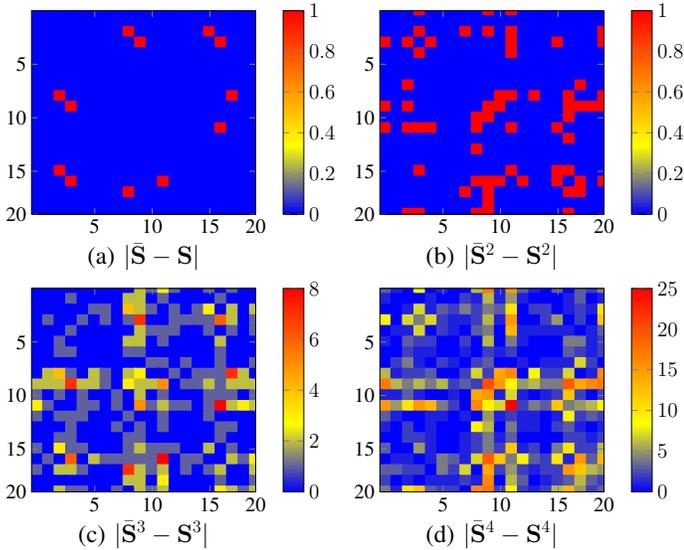
\begin{figure}[!t]
	\centering
	\begin{tikzpicture}[baseline,scale=.55]
\begin{groupplot}[
    table/col sep=space,
    width=7cm,
    height=6.5cm,
    group style={group size=2 by 2,
        horizontal sep=3cm,
        vertical sep=1.8cm,},
    enlargelimits=false,
    colorbar,
    xtick={5,10,15,19},
    xticklabels={5,10,15,20},
    ytick={5,10,15,19},
    yticklabels={5,10,15,20},
    x label style={font=\LARGE},
    tick label style={font=\Large}
    ]

    \pgfplotstableread{data/A-Apert_k1.csv}\matrixA
    \pgfplotstableread{data/A-Apert_k2.csv}\matrixB
    \pgfplotstableread{data/A-Apert_k3.csv}\matrixC
    \pgfplotstableread{data/A-Apert_k4.csv}\matrixD
    
\nextgroupplot[xlabel={(a) $|\barbS - \bbS|$},]
    \addplot [
        matrix plot,
        point meta=explicit,
        mesh/cols=20,  
    ] table [meta=value] {\matrixA};

\nextgroupplot[xlabel={(b) $|\barbS^2 - \bbS^2|$}]
    \addplot [
        matrix plot,
        point meta=explicit,
        mesh/cols=20,  
    ] table [meta=value] {\matrixB};

\nextgroupplot[xlabel={(c) $|\barbS^3 - \bbS^3|$}]
    \addplot [
        matrix plot,
        point meta=explicit,
        mesh/cols=20,  
    ] table [meta=value] {\matrixC};
    
\nextgroupplot[xlabel={(d) $|\barbS^4 - \bbS^4|$},
    colorbar style={ytick={0,5,...,25}}]
    \addplot [
        matrix plot,
        point meta=explicit,
        mesh/cols=20,  
    ] table [meta=value] {\matrixD};
    
\end{groupplot}
\end{tikzpicture}
	\caption{Absolute error for different powers of the matrix $\bbS$ and its perturbed version $\barbS$. The true GSO is the adjacency matrix of an Erd\H{o}s-Rényi graph with link probability of 0.15, and $\barbS$ is perturbed by creating and destroying links independently with a probability of $0.05$.}\label{fig:pert_err_example}
\end{figure}

As pointed out in the introduction, the presence of uncertainties in the topology of $\ccalG$ is particularly relevant when dealing with GFs.
Indeed, due to the polynomial definition of $\bbH$, 
even small perturbations can lead to significant errors when $\barbS$ (and not $\bbS$) is used as the true GSO.
To see this more clearly, \cref{fig:pert_err_example} provides an example that illustrates how the errors encoded in $\bbDelta$ propagate for different matrix powers, demonstrating that the discrepancies between $\barbS^r$ and $\bbS^r$ increase swiftly as the power $r$ grows.
More rigorously, let $C$ be a positive constant such that $\|\bbS\|\leq C$ and $\|\barbS\| \leq C$, and define 
$\barbH:=\sum_{r=0}^{N-1}h_r\barbS^{r}$. 
Then, the error generated by the perturbations is upper-bounded by
\begin{equation}\label{eq:err_H_pert}
    \|\barbH - \bbH\| \leq \sum_{r=1}^{N-1}|h_r|\|\barbS^r - \bbS^r\| \leq \sum_{r=1}^{N-1}|h_r|rC^{r-1}\|\bbDelta\|,
\end{equation}
where the last inequality follows from~\cite[Lemma 3]{levie2019transferability}.
In words, the maximum difference between the true $\bbH$ and the perturbed $\barbH$ increases \emph{exponentially} with the degree of the GF.

From the previous discussion, it is not surprising that the imperfect knowledge of the graph topology is also relevant when estimating the filter coefficients.
In fact, ignoring the errors in $\bbDelta$ and attempting to estimate $\bbh$ solving \eqref{eq:solving_fi} when $\barbS$ is used in lieu of the true (unknown) $\bbS$ leads to a poor solution, as we illustrate numerically in Sec.~\ref{S:experiments}.
Motivated by this, we approach the GF identification problem from a robust perspective by taking into account the imperfect knowledge of the GSO.
The resultant robust estimation task is formally stated next.


\begin{problem}\label{P:problem_statement}
    Let $\ccalG$ be a graph with $N$ nodes, let $\bbS \in \reals^{N \times N}$ be the true (unknown) GSO associated with $\ccalG$, and let $\barbS \in \reals^{N \times N}$ be the perturbed (observed) GSO. Moreover, let $\bbX\in \reals^{N \times M}$ and $\bbY \in \reals^{N \times M}$ be a pair of matrices collecting $M$ observed input and output signals defined over $\ccalG$ and related by the model in \eqref{eq:observation_model}.
    Our goal is to use the triplet $(\bbX,\bbY,\barbS)$ to: i) learn the GF $\bbH$ that best fits the model in \eqref{eq:observation_model} and ii) recover an enhanced estimation of $\bbS$.
    To that end, we make the following assumptions: \\
    (\textbf{AS1}) $\bbH$ is a polynomial of $\bbS$ [cf. \eqref{eq:graph_filter}]. \\
    (\textbf{AS2}) $\bbS$ and $\barbS$ are close according to some metric $d(\bbS,\barbS)$, i.e., the observed perturbations are ``small'' in some sense.
\end{problem}

On top of the previous two assumptions, we also consider that the norm of the noise observation matrix $\bbW$ in \eqref{eq:observation_model} is small, which is a workhorse assumption in this type of problems. Similar to standard GF identification approaches, (\textbf{AS1}) limits the degrees of freedom of the linear operator in \eqref{eq:observation_model}. However, the fact of the true $\bbS$ being unknown adds  uncertainty to the problem and, as a result, additional signal observations are required to achieve an identification performance comparable to the one obtained when $\barbS=\bbS$. Regarding the recovery of the true GSO, (\textbf{AS2}) accounts for the hypothesis that $\barbS$ is a perturbed observation of $\bbS$ and, hence, matrices $\bbS$ and $\barbS$ are not extremely different. Note that this guarantees that ``some'' information about the true GSO is available, so that (\textbf{AS1}) can be effectively leveraged. While not exploited in our formulation, additional assumptions constraining the GSO could also be incorporated into the problem. Finally, the metric $d(\cdot, \cdot)$ employed to quantify the similarity between $\bbS$ and $\barbS$ should depend on the model for the perturbation $\bbDelta$, a subject that is briefly discussed next.  

\subsection{Modeling graph perturbations}\label{S:graph_pert}
The development and analysis of graph perturbation models that combine practical relevance and analytical tractability constitutes an interesting yet challenging open line of research~\cite{miettinen2018graph,miettinen2019modelling}.
Due to its flexibility and tractability, this paper considers an additive perturbation model [cf. \eqref{eq:additive_GSO_perturbation_model}], so that the focus is constrained to understanding the structural (statistical) properties of matrix  $\bbDelta = \barbS - \bbS$.

Consider first the case where perturbations only \emph{create or destroy links} independently.
If $\ccalG$ is an \emph{unweighted graph}, a simple approach is to consider perturbations modeled as independent Bernoulli variables with possibly different creation/destruction probabilities.
In this case, the entries of $\bbDelta$ would be
\begin{equation}
    \Delta_{ij} = \left\{\hspace{-2mm} \begin{array}{rl}
        1 &  \mathrm{if} \; \mathrm{link} \; (i,j) \; \mathrm{is} \; \mathrm{created},\\
        -1 & \mathrm{if} \; \mathrm{link} \; (i,j) \; \mathrm{is} \; \mathrm{destroyed},\\
        0 & \mathrm{otherwise}.
    \end{array} \right.
\end{equation}
Since $\bbDelta$ models the creation and destruction of links, it is worth noting that $\Delta_{ij} = 1$ only if $S_{ij} = 0$ and $\Delta_{ij} = -1$ only if $S_{ij} = 1$.
In the more general case of $\ccalG$ being a \emph{weighted graph}, $\Delta_{ij} = -S_{ij}$ destroys an existing link while $\Delta_{ij} = z$ creates a new link.
Here, $z$ is a random variable sampled from a particular distribution (typically mimicking the weight distribution of the true $\bbS$).
When facing this type of perturbations, a suitable distance function is the $\ell_0$ norm
\begin{equation}
    d(\bbS,\barbS) = \| \bbS - \barbS \|_0,
\end{equation}
with the $\ell_1$ norm $\| \bbS - \barbS \|_1$ being a prudent convex relaxation.

Alternatively, rather than creating or destroying links, perturbations may represent uncertainty over the edge weights.
This entails the support of matrix $\bbDelta$ matching that of $\bbS$ and $\barbS$, and the non-zero entries of $\bbDelta$ being sampled from a distribution that models the observation noise. 
For example, if the noise is zero-mean, Gaussian and white, it holds that $\Delta_{ij} \sim \ccalN(0,\sigma^2)$ when $\bbS_{ij} \neq 0$ and $\Delta_{ij}=0$ when $\bbS_{ij} = 0$.
Under this setting, an appropriate distance metric is given by
\begin{equation}
    d(\bbS, \barbS) = \| \bbS_\ccalE - \barbS_\ccalE\|_2^2,
\end{equation}
where $\bbS_\ccalE$ and $\barbS_\ccalE$ only select the non-zero entries (edges) in $\bbS$ and $\barbS$.
Additionally, one can have setups where the two types of perturbations are simultaneously present.
That is, perturbations may create and destroy links while the actual value of the existing links is also uncertain.
In such a case, a combination of $\ell_1$  and $\ell_2$ norms like in elastic nets~\cite{zou2005regularization} is adequate.  

The models previously described only consider the perturbation of edges in an independent fashion.
However, there may be scenarios where the perturbations are correlated.
Consider for example a communication network.
If the power supply of a node stalls, the signal-to-noise ratio of all its links will be poor, and hence, links involving that node will be more likely to fail.
Perturbations dependent across links can be modeled by means of a multivariate correlated Bernoulli distribution, an Ising model, or more sophisticated random graph models~\cite{dai2013multivariate}.
When prior information about the dependence of the perturbations is available, it can be incorporated into the function $d(\bbS,\barbS)$ to better extract the information encoded in $\barbS$.

\section{Robust GF identification}\label{S:rfi}
This section presents the optimization problem and the proposed algorithm to estimate $\bbH$ and $\bbS$ under the setting described in \cref{P:problem_statement}.
Given the matrices $\bbX$, $\bbY$, and $\barbS$, we approach the robust GF identification task by means of the following non-convex optimization
\begin{alignat}{2}\label{eq:rfi_nonconvex}
    \!&\! \hbH, \hbS = \argmin_{\bbH, \bbS} && \;\; \|\bbY-\bbH\bbX\|_F^2 + \lambda d(\bbS, \barbS) + \beta \|\bbS\|_0 \nonumber \\
    \!&\! \hspace{1.2cm}\mathrm{\;\;s. \;to: } && \;\; \bbS \in \ccalS, \;\; \bbS\bbH = \bbH\bbS,
\end{alignat}
where $\mathrm{s.\;to}$ stands for $\mathrm{subject\;to}$. The first term in the objective promotes the linear input-output relation in \eqref{eq:observation_model}, encouraging the norm of $\bbW=\bbY - \bbH\bbX$ to be small. The use of the Frobenius norm is well-justified when the observation noise is Gaussian and white, but other types of noise could be accommodated by using a different norm. The second term incorporates the assumption (\textbf{AS2}) as a regularizer to obtain an estimate $\hbS$ that is related to the given GSO $\barbS$. The $\ell_0$ norm in the third term accounts for the fact of $\bbS$ being sparse. Clearly, if additional information about $\bbS$ is available, it can be incorporated into \eqref{eq:rfi_nonconvex}, either as a regularizer (e.g., a statistical prior quantifying the log-likelihood of a class of GSOs) or as a constraint that \emph{must} be satisfied (e.g., the GSO being symmetric). The latter is indeed the role of $\bbS\in\ccalS$ in \eqref{eq:observation_model}, with $\ccalS$ representing a (desired) family of GSOs such as the set of adjacency matrices with no self-loops ($\ccalS$ is the set of matrices with non-negative entries whose diagonal entries are zero) or the set of combinatorial graph Laplacians (matrices with non-positive off-diagonal entries and zero row-sum). Finally, the (key) constraint $\bbS\bbH = \bbH\bbS$ captures the fact of $\bbH$ being a polynomial of $\bbS$ and not of $\barbS$ (\textbf{AS1}). Note first that the constraint is pertinent, if $\bbH$ is a polynomial of $\bbS$, then $\bbH$ and $\bbS$ have the same eigenvectors and, as a result, their product commutes \cite{segarra2017optimal}. More importantly for the GF-identification at  hand, when the GSO is perfectly known the model $\bbH=h_0\bbI+h_1\bbS+...+h_{N-1}\bbS^{N-1}$ is linear in the unknown $\bbh$. As a result, a formulation that estimates $\bbh$ directly (as carried out in classical non-robust approaches) is well-motivated. However, when both $\bbh$ and $\bbS$ are unknown, the model $\bbH=h_0\bbI+h_1\bbS+...+h_{N-1}\bbS^{N-1}$ is highly non-linear in $\bbS$, challenging the development of a tractable solution that jointly estimates $\bbh$ and $\bbS$. Our formulation bypasses this problem by recasting the optimization variables as $\bbH$ and $\bbS$, leading to the (more tractable) bilinear constraint in \eqref{eq:rfi_nonconvex}. Nonetheless, if estimating $\bbh$ is the ultimate goal, this can be readily achieved from $\hbH$ and $\hbS$ as
\begin{equation}
    \hbh = \Big(\vvec(\bbI), \vvec(\hbS),..., \vvec(\hbS^{N-1})\Big)^\dagger \vvec(\hbH).
\end{equation}

The approach put forth in \eqref{eq:rfi_nonconvex} has two main advantages. 
First, while most works formulate the recovery of the GF in the spectral domain, our formulation operates in the vertex domain.
Working on the spectral domain would imply finding the Vandermonde GFT matrix $\bbPsi$. Since this matrix involves high-order polynomials of the eigenvalues of the GSO, it is also prone to numerical instability and error accumulation~\cite{djuric2018cooperative}.
Even if approaches that bypass this issue by estimating the graph-frequency response $\tbh = \bbPsi\bbh$ in lieu of $\bbh$ are adopted, the estimation would still be challenging since they require computing the eigenvectors $\bbV$, which are known to be highly sensitive to errors in the GSO (especially those associated with small eigenvalues)\cite{segarra2015stability,ceci2020graph}. On top of this, characterizing the spectral errors and incorporating those to the optimization is not a trivial task. 
The second advantage emanates from casting the true GSO $\bbS$ as an explicit optimization variable. As already explained, this approach is robust to error accumulation and facilitates the incorporation of the (additive) effect of the perturbations into the optimization. An additional benefit is that we obtain a denoised version (enhanced estimation) of the true GSO, which can be practically relevant in most real-world applications. 

In a nutshell, in the context of robust GF identification, choosing a formulation that: i) works entirely in the vertex domain, ii) considers $\bbS$ as an explicit optimization variable, and iii) codifies the GF structure via the constraint $\bbH\bbS=\bbS\bbH$, exhibits multiple advantages. However, it must be noted that the number of optimization variables is larger than in classical approaches (adding computational complexity) and that the bilinear filtering constraint $\bbH\bbS=\bbS\bbH$, while more tractable than its polynomial counterpart, is still non-convex. Alternatives to deal with these issues are discussed in later sections.


\subsection{Alternating minimization for robust GF identification}
This section presents a systematic efficient approach to find an approximate solution to \eqref{eq:rfi_nonconvex}. Since the goal is to design specific algorithms, from this section onwards, we particularize the GSO distance to $d(\bbS,\barbS)=\|\bbS-\barbS\|_0$, so that, according to the discussion in Sec.~\ref{S:graph_pert}, the focus is on graph perturbations that create and destroy links. Apart from its practical relevance, the reason for choosing the $\ell_0$ norm as a distance is also motivated by its more intricate (challenging) structure. Indeed, the algorithms presented next can be easily adapted to (more tractable) distances associated with alternative perturbation models. 
Having clarified this, the main obstacle to solving \eqref{eq:rfi_nonconvex} is its lack of convexity, which emanates from two different \emph{sources}: \emph{(s1)} the $\ell_0$ norms in the objective, and \emph{(s2)} the bilinear constraint involving $\bbS$ and $\bbH$.
Next, we explain the strategy adopted to deal with them and find a solution to \eqref{eq:rfi_nonconvex} by solving a succession of convex problems.
\begin{itemize}
\item Regarding the $\ell_0$ norm in \emph{(s1)}, a workhorse approach is to replace it with its convex surrogate, the $\ell_1$ norm. However, it is possible to exploit more sophisticated (non-convex) alternatives that typically lead to sparser solutions.
The one chosen in this paper is to approximate the $\ell_0$ norm of a generic matrix $\bbZ \in \reals^{I \times J}$ using the logarithmic penalty
\begin{equation}\label{eq:log_sparsity}
     \| \bbZ \|_0 \approx r_\delta(\bbZ) := \sum_{i=1}^I\sum_{j=1}^J \log(|Z_{ij}| + \delta),
\end{equation}
where $\delta$ is a small positive constant~\cite{candes2008enhancing}.
The non-convexity of the logarithm can be handled efficiently by relying on a majorization-minimization approach (MM)~\cite{sun2016majorization}, which considers an iterative linear approximation leading to an iterative re-weighted $\ell_1$ norm.
It is worth noting that, since we will consider an iterative algorithm to deal with the bilinearity of \eqref{eq:rfi_nonconvex}, the iterative nature of the re-weighted $\ell_1$ norm will not impose a significant computational burden.
Details on the exact form of this sparse regularizer will be provided soon, when describing the estimation of $\bbS$. 
\item To deal with the bilinear terms in \emph{(s2)}, we adopt an alternating optimization approach \cite{gorski2007biconvex} resulting in an iterative algorithm where the optimization variables $\bbH$ and $\bbS$ are updated in \emph{two separate iterative steps}.
At each step, we optimize over one of the optimization variables with the other remaining fixed, resulting in two simpler problems that can be solved efficiently. The details about the specific steps will be provided shortly.
\end{itemize}

Taking into account these considerations, the first task to implement our approach is to rewrite the problem in \eqref{eq:rfi_nonconvex} as
\begin{equation}\label{eq:rfi_nonconvex_rew}
    \min_{\bbS\in \ccalS, \bbH} \|\bbY \!-\!\bbH\bbX\|_F^2 \!+\! \lambda r_{\delta_1}\!(\bbS \!-\! \barbS) \!+\! \beta r_{\delta_2}\!(\bbS) \!+\! \gamma \|\bbS\bbH \!-\! \bbH\bbS\|_F^2,
\end{equation}
where we recall that $r_{\delta}(\cdot)$ was introduced in \eqref{eq:log_sparsity}. Note that: i) the logarithmic penalty has also been used to promote sparsity in the term $\bbS - \barbS$ since we selected the $\ell_0$ norm as the distance between $\bbS$ and $\barbS$, and ii) the constraint $\bbS\bbH = \bbH\bbS$ was relaxed and rewritten as a regularizer, a formulation more amenable to an alternating optimization approach.

The next task is to solve \eqref{eq:rfi_nonconvex_rew} by means of an iterative algorithm that blends techniques from alternating optimization and MM approaches.
Specifically, for a maximum of $t_{max}$ iterations, we run the following two steps at each iteration $t = 0,...,t_{max}-1$. 

\vspace{2mm}
\noindent \textbf{Step 1: GF Identification.}
We estimate the block of $N^2$ variables collected in $\bbH$ while the current estimate of the GSO, denoted as $\bbS^{(t)}$, remains fixed.
This results in the convex optimization problem
\begin{alignat}{2}\label{eq:step1_filterid}
\!\!&\bbH^{(t+1)} =  \arg \min_{\bbH} && \|\bbY\!-\!\bbH\bbX\|_F^2 \!+\! \gamma \|\bbS^{(t)}\bbH \!-\! \bbH\bbS^{(t)}\|_F^2,
\end{alignat}
an LS minimization whose closed-form solution is 
\begin{align}\label{eq:step1_closed_form}
    \vvec(\bbH^{(t+1)}) \!= &\big(\bbX\bbX^\top \!\!\otimes\! \bbI \!+\! \gamma (\bbS\bbS^\top \!\!\oplus\! \bbS^\top\bbS \!-\! \bbS^\top \!\!\otimes\! \bbS^\top \!\!-\! \bbS \!\otimes\!  \bbS)\big)^{-1}\nonumber\\
    &\times (\bbX \!\otimes \bbI)\vvec(\bbY).
\end{align} 
Here, $\otimes$ is the Kronecker product, $\oplus$ is the Kronecker sum, and $\bbI$ is the identity matrix of size $N \times N$.
Also note that \eqref{eq:step1_closed_form} omitted the iteration superscript in $\bbS^{(t)}$ to alleviate notation. 

\vspace{2mm}
\noindent \textbf{Step 2: Graph Denoising.}
Following an MM scheme, we optimize an upper bound of \eqref{eq:rfi_nonconvex_rew} where the logarithmic penalties are linearized.
Then, we estimate the block of $N^2$ variables collected in $\bbS$ while the current estimate of the GF $\bbH^{(t+1)}$ remains fixed.
This yields
\begin{align}\label{eq:step2_graph_denoising}
    \bbS^{(t+1)} =  \arg\min_{\bbS \in \ccalS} &\sum_{i=1}^N\sum_{j=1}^N \big(\lambda \bar{\Omega}_{ij}^{(t)}|S_{ij}-\bar{S}_{ij}|+\beta \Omega_{ij}^{(t)}|S_{ij}|\big) \nonumber \\
     &+ \gamma \|\bbS\bbH^{(t+1)} - \bbH^{(t+1)}\bbS\|_F^2,
\end{align}
where $\bar{\bbOmega}^{(t)}$ and $\bbOmega^{(t)}$ are computed in an entry-wise fashion based on the GSO estimate from the previous iteration as
\begin{align}\label{eq:log_weights}
    &\bar{\Omega}_{ij}^{(t)} = \frac{1}{|S_{ij}^{(t)} - \bar{S}_{ij}| + \delta_1},
    &\Omega_{ij}^{(t)} = \frac{1}{|S_{ij}^{(t)}| + \delta_2}.
\end{align}

\begin{algorithm}[tb]
\footnotesize
\SetKwInput{Input}{Input}
\SetKwInOut{Output}{Output}
\Input{$\bbX$, $\bbY$, $\barbS$}
\Output{$\hbH$, $\hbS$.}
\SetAlgoLined
Initialize $\bbS^{(0)}$ as $\bbS^{(0)} = \barbS$. \\
\For{$t=0$ \KwTo $t_{max}-1$}{
    Compute $\bbH^{(t+1)}$ by solving \eqref{eq:step1_closed_form} fixing $\bbS^{(t)}$. \\
    Update $\bbOmega^{(t)}$ and $\bar{\bbOmega}^{(t)}$ as in \eqref{eq:log_weights}. \\
    Compute $\bbS^{(t+1)}$ by solving \eqref{eq:step2_graph_denoising} using $\bbH^{(t+1)}$, $\bbOmega^{(t)}$, and $\bar{\bbOmega}^{(t)}$.
    \\
}
$\hbH = \bbH^{(t_{max})},\; \hbS = \bbS^{(t_{max})}$.
\caption{\small Robust GF identification with graph denoising.}
\label{A:rfi_alg}
\end{algorithm}

The overall alternating algorithm is summarized in Alg.~\ref{A:rfi_alg}, where a fixed number of iterations is considered.
The algorithm starts by initializing the GSO as  $\bbS^{(0)} = \barbS$ (although other options could also be appropriate), and then, it iterates between Steps 1 and 2 for a fixed number of epochs (or until some stopping criterion is met).
In this regard, a key feature of the algorithm is that it is guaranteed to converge to a stationary point, as is formally stated next.

\begin{theorem}\label{thm1}
    Denote as $f(\bbH,\bbS)$ the objective function in \eqref{eq:rfi_nonconvex_rew}, and let $\ccalZ^*$ be the set of stationary points of $f$.
    Let $\bbz^{(t)}=[\vvec(\bbH^{\!(t)})^{\!\top}\!, \vvec(\bbS^{\!(t)})^{\!\top}\!]^{\!\top}$ represent the solution provided by the iterative algorithm \eqref{eq:step1_closed_form}-\eqref{eq:step2_graph_denoising} after $t$ iterations. Assuming that i)~the GSO does not have repeated eigenvalues and ii)~every row of $\tbX\!=\!\bbV^{\!-1}\bbX$ has at least one nonzero entry, then $\bbz^{\!(t)}$ converges to a stationary point of $f$ as $t$ goes to infinity, i.e.,
        \begin{equation}
            \lim_{t\to\infty} \mathsf{d}(\bbz^{(t)}~|\ccalZ^*) = 0, \nonumber
        \end{equation}
    with $\mathsf{d}(\bbz~|\ccalZ^*) := \min_{\bbz^* \in \ccalZ^*} \|\bbz-\bbz^*\|_2$.
\end{theorem}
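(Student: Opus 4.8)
The plan is to recognize Alg.~\ref{A:rfi_alg} as an instance of block successive upper-bound minimization with two blocks, $\bbH$ and $\bbS$, blending the alternating-minimization and majorization-minimization machinery~\cite{gorski2007biconvex,sun2016majorization}, and then to verify the hypotheses that guarantee subsequential convergence to stationary points. First I would identify the surrogate used in each block. For the $\bbH$-block the surrogate is $f$ itself with $\bbS$ frozen, a convex quadratic, so Step~1 is exact block minimization. For the $\bbS$-block the only nonconvexity comes from the concave logarithmic penalties; since $\log(u+\delta)$ is concave in $u\ge 0$, its first-order expansion is a global tangent that lies above the function, which is exactly the linearization producing the weighted $\ell_1$ surrogate in \eqref{eq:step2_graph_denoising}. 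I would check the three MM conditions: the surrogate (i)~majorizes $f$, (ii)~is tight at $\bbS^{(t)}$, and (iii)~matches the one-sided directional derivatives of $f$ at $\bbS^{(t)}$ — the last reducing to the fact that the weights in \eqref{eq:log_weights} equal $\tfrac{d}{du}\log(u+\delta)$ evaluated at the current iterate, so the subdifferentials of surrogate and objective coincide there. I would also note that $f$ is continuous and that $\ccalS$ (adjacency or Laplacian families) is closed and convex.

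Next I would establish monotone descent and boundedness. Because each update either minimizes $f$ exactly (Step~1) or minimizes a tight majorizer (Step~2), the values $f(\bbz^{(t)})$ are nonincreasing; every term of $f$ is bounded below (the squared norms are nonnegative and each $\log(|\cdot|+\delta)\ge \log\delta$ entrywise), so $f(\bbz^{(t)})$ converges. Boundedness of the iterates is where the two assumptions enter. The penalty $r_{\delta_1}(\bbS-\barbS)$ is coercive in $\bbS$, so the descent property confines the $\bbS$-iterates to a bounded sublevel set. For $\bbH$, boundedness follows from positive-definiteness of the quadratic form defining \eqref{eq:step1_closed_form}: a matrix in its null space must satisfy both $\bbH\bbX=\bbzero$ and $\bbS\bbH=\bbH\bbS$; assumption~(i) (distinct eigenvalues) forces $\bbH=\bbV\diag(\tbh)\bbV^{-1}$, whereupon $\bbH\bbX=\bbzero$ becomes $\diag(\tbh)\tbX=\bbzero$, and assumption~(ii) (no zero row in $\tbX=\bbV^{-1}\bbX$) forces $\tbh=\bbzero$, i.e. $\bbH=\bbzero$. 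This simultaneously gives a unique Step~1 minimizer and, by continuity of $\lambda_{\min}(\cdot)$ over the compact set of $\bbS$-iterates, a uniform operator-norm bound on the inverse in \eqref{eq:step1_closed_form}, hence bounded $\bbH$-iterates.

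With the surrogate conditions verified and the iterates bounded, I would invoke the convergence theorem for this scheme to conclude that every limit point of $\{\bbz^{(t)}\}$ is a block-coordinatewise minimizer of $f$. Crucially, the nonsmooth regularizers depend only on $\bbS$ and are convex, while the coupling term $\gamma\|\bbS\bbH-\bbH\bbS\|_F^2$ is smooth; this block-separable-plus-smooth structure is precisely what upgrades a coordinatewise minimum to a genuine stationary point, in the sense that all feasible directional derivatives of $f$ are nonnegative there. Finally I would promote ``every limit point is stationary'' to the stated $\mathsf{d}(\bbz^{(t)}|\ccalZ^*)\to 0$ by a standard compactness argument: $\ccalZ^*$ is closed and the iterates lie in a compact sublevel set, so if $\mathsf{d}(\bbz^{(t)}|\ccalZ^*)$ did not vanish, a subsequence bounded away from $\ccalZ^*$ would admit a convergent sub-subsequence whose limit is both stationary and at strictly positive distance from $\ccalZ^*$, a contradiction.

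The main obstacle I anticipate is not the descent or boundedness bookkeeping but the step certifying that limit points are \emph{stationary for the full problem} rather than merely optimal in each block. Because the $\ell_0$-type terms are treated through a moving majorizer whose weights \eqref{eq:log_weights} change with the iterate, one must argue that the directional-derivative consistency of the MM surrogate is strong enough to transfer block-optimality into joint stationarity, and one must confirm that the possible nonuniqueness of the convex $\bbS$-subproblem minimizer does not invalidate the limit-point analysis. Pinning down this interplay between the MM linearization and the alternating scheme — and ensuring the assumptions (i)--(ii) deliver exactly the positive-definiteness needed at the relevant iterates — is the delicate part of the proof.
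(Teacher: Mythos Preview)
Your proposal is correct and follows essentially the same route as the paper: both cast Alg.~\ref{A:rfi_alg} as a two-block BSUM scheme, verify the majorization/first-order-consistency of the linearized log surrogate, check level-set compactness, exploit that the nonsmooth part of $f$ depends only on the $\bbS$-block to obtain regularity, and use assumptions (i)--(ii) via the same null-space argument (commutativity with simple spectrum forces $\bbH=\bbV\diag(\tbh)\bbV^{-1}$, then no zero row in $\tbX$ forces $\tbh=\bbzero$) to secure uniqueness of the Step~1 minimizer. One small slip: the log penalties $r_\delta$ in $f$ are \emph{not} convex, so your clause ``the nonsmooth regularizers \ldots\ are convex'' is inaccurate; fortunately the argument only needs block separability of the nonsmooth part, which you do have, so the conclusion stands.
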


The proof relies on the convergence results shown in~\cite[Th. 1b]{hong2015unified} and the details are provided in App.~A.
Note that the convergence of the algorithm was not self-evident since the original optimization problem in \eqref{eq:rfi_nonconvex_rew} is non-convex and Step 2 is minimizing an upper-bound of the original objective function. The sufficient conditions in i) and ii) guarantee that every graph frequency is excited so that the GF is identifiable and \eqref{eq:step1_filterid} has a unique solution, which is a requirement for convergence (see Prop.~\ref{thm3} in App.~A for details). Clearly, condition ii) is fulfilled even for $M=1$ if all the entries of the vector $\tbx=\bbV^{-1}\bbx$ are nonzero. Alternatively, when $M>1$ and ii) is satisfied, condition i) can be relaxed.

Another relevant element in the proposed algorithm is the weight $\gamma$.
If $\gamma$ is set to a value that is too large, the GF estimated in the first iteration $\bbH^{(1)}$ will be an (almost exact) polynomial of $\bar{\bbS}$ so that the algorithm will converge quickly to the same solution as that of the non-robust design 
[cf. \eqref{eq:rfi_nonconvex} with $\bbS=\bar{\bbS}$].
On the other hand, if $\gamma$ is too close to zero the two problems decouple and the solution converges quickly to that of the two separated problems [cf. \eqref{eq:step1_filterid} and \eqref{eq:step2_graph_denoising} with $\gamma=0$].
As a result, the value of the parameter must be chosen carefully. In this context, schemes that start with a small $\gamma$ to encourage the exploration during the warm-up phase, and then increase $\gamma$ as the iteration index grows to guarantee that the final $\hbH$ is a polynomial of $\hbS$ are a suitable alternative for the setup at hand.

Finally, one drawback of the proposed robust GF identification algorithm is that the optimization problems in \eqref{eq:step1_filterid} and \eqref{eq:step2_graph_denoising} may be slow when dealing with large graphs.
However, we will mitigate this issue by introducing an efficient implementation that reduces the computational complexity of the overall algorithm (see Sec.~\ref{S:efficient_rfi}).

\subsection{Leveraging stationary observations}\label{S:stationary_observations}
The alternating convex approximation in Alg.~\ref{A:rfi_alg} exploits the fact that $\bbX$ and $\bbY$ are linearly related via $\bbH$, which is a polynomial of $\bbS$.
However, in setups where the perturbations in $\barbS$ are very large, obtaining accurate estimates of $\bbS$ and $\bbh$ from $\hbH$ may still be challenging. One alternative to overcome this issue is to leverage the additional structure potentially present in our data.
Indeed, as detailed in the introduction, it is common to consider setups where the signals exhibit additional properties depending on the supporting graph, with notable examples including graph-bandlimited signals~\cite{shuman2013emerging,sandryhaila2014discrete}, diffused sparse graph signals~\cite{segarra2017blind,rey2019sampling}, or graph stationary signals~\cite{marques2017stationary,shafipour2020online,buciulea2022learning}. Clearly, incorporating such additional information into the optimization problem would enhance its estimation performance.

This section explores this path, restricting our attention to the case where the observed signals are stationary on $\ccalG$. The motivation for this decision is that, due to the tight connection between graph-stationary signals and GFs (see Sec.~\ref{S:preliminaries}), the formulation in \eqref{eq:rfi_nonconvex_rew} and Alg.~\ref{A:rfi_alg} require relatively minor modifications to incorporate the assumption of $\bbX$ and $\bbY$ being stationary on $\bbS$, leaving the incorporation of additional signal models as future work. To formulate the updated problem, recall that the covariance matrix of a stationary graph signal can be expressed as a polynomial of the GSO (see Sec.~\ref{S:preliminaries}). Therefore, incorporating stationarity calls for modifying \eqref{eq:rfi_nonconvex_rew} as
\begin{alignat}{2}\label{eq:rfi_nonconvex_st}
    \!&\!\min_{\bbS \in \ccalS, \bbH} && \;\!\|\bbY\!-\!\bbH\bbX\|_F^2 + \lambda r_{\delta_1}\!(\bbS \!-\! \barbS) + \beta r_{\delta_2}\!(\bbS)+ \gamma \|\bbS\bbH \!-\! \bbH\bbS\|_F^2   \nonumber     \\
    \!&\! \mathrm{\;\;s. \;to: } && \; \|\bbC_\bby\bbS\!-\!\bbS\bbC_\bby\!\|_F^2\!\leq\! \epsilon_\bby,\,\!\|\bbC_\bbx\bbS\!-\!\bbS\bbC_\bbx\!\|_F^2\!\leq\!\epsilon_\bbx,
\end{alignat}
where $\bbC_\bby$ and $\bbC_\bbx$ denote the covariance matrices of $\bbY$ and $\bbX$, respectively.
If the covariances are perfectly known, then the corresponding parameters $\epsilon_\bby$ and $\epsilon_\bbx$ are set to zero.
Alternatively, if the $\bbC_\bby$ and $\bbC_\bbx$ are the sample estimates of the true covariances, then the values of $\epsilon_\bby$ and $\epsilon_\bbx$ must be selected based on the quality of the estimators (accounting, e.g., for the number of available observations $M$).

The constraints in \eqref{eq:rfi_nonconvex_st} capture the graph-stationarity assumption by promoting the commutativity with the true GSO.
Therefore, such constraints are considered in the graph denoising step [cf. \eqref{eq:step2_graph_denoising}].
In addition, since $\bbC_\bby$, $\bbC_\bbx$ and $\bbH$ are all polynomials of $\bbS$, the equalities $\bbC_\bby \bbH=\bbH\bbC_\bby$ and $\bbC_\bbx \bbH=\bbH\bbC_\bbx$ must hold as well, so it is also possible to augment the GF identification step [cf. \eqref{eq:step1_filterid}] with the corresponding constraints.
While in the interest of brevity, we do not spell out all the possible formulations here, the impact of several of these alternatives is numerically analyzed in Sec.~\ref{S:experiments}. Finally, it is important to note that, since the stationarity constraints are quadratic and convex, the convergence described in \cref{thm1} also holds true for the iterative algorithm associated with \eqref{eq:rfi_nonconvex_st}.

\section{Joint robust identification of multiple GFs}\label{S:rfi_joint}
In Sec.~\ref{S:rfi}, we approached the problem of identifying a single GF $\bbH$ defined over a single graph $\ccalG$.
However, in a variety of situations we encounter multiple processes (signals) over the same graph $\ccalG$.
Consider for example a network of weather stations measuring the temperature, humidity, and wind speed.
Each of these measurements corresponds to observations of a different process, all of them taking place over a common graph.
Intuitively, since all the GFs are related by the underlying graph $\ccalG$, we propose a \emph{joint} GF identification approach that exploits this relationship to enhance the quality of the estimation.
We focus first on the case where the input-output signals associated with each GF (graph process) are observed separately. Later in the section, we address a slightly more involved case where the GFs model the (AR) dynamics of a time-varying graph signal and, as a result, the observed signals are intertwined.

Consider a set of $K$ unknown GFs $\{\bbH_k\}_{k=1}^K$, all represented by $N \times N$ matrices and defined over the  graph $\ccalG$.
To be consistent with \cref{P:problem_statement}, we assume that: i) the true $\bbS$ is unknown and only the perturbed version $\barbS$ is available; ii) all $\bbH_k$ are polynomials of the \emph{same} GSO $\bbS$; and iii) for each $k$, matrices $\bbX_k \in \reals^{N \times M_k}$ and $\bbY_k \in \reals^{N \times M_k}$ collect the observed input and output graph signals and are related via
\begin{equation}\label{eq:observation_model_multi}
    \bbY_k = \bbH_k\bbX_k + \bbW_k,
\end{equation}
with $\bbH_k=\sum_{r=0}^{N-1}h_{r,k}\bbS^r$ and $\bbW_k$ being a white random matrix capturing observation noise and model inaccuracies.
Then, we aim at estimating the GFs $\{\bbH_k\}_{k=1}^K$ in a joint fashion while taking into account the inaccuracies in the topology of $\ccalG$.
This is summarized in the following problem statement.

\begin{problem}\label{P:multiple_filter_ir}
    Let $\ccalG$ be a graph with $N$ nodes, let $\bbS \in \reals^{N \times N}$ be the true (unknown) GSO associated with $\ccalG$, and let $\barbS \in \reals^{N \times N}$ be the perturbed (observed) GSO.
    Moreover, let $\bbX_k\in \reals^{N \times M_k}$ and $\bbY_k \in \reals^{N \times M_k}$ be the matrices collecting the $M_k$ observed input and output graph signals associated with $k=1,...,K$ network processes, all defined over $\ccalG$ and adhering to the model in \eqref{eq:observation_model_multi}.
    Our goal is to use $\{\bbX_k\!\}_{k=1}^K$, $\{\bbY_k\!\}_{k=1}^K$, and $\barbS$ to learn the $K$ GFs $\{\bbH_k\}_{k=1}^K$ that best fit the data, along with an enhanced estimation of $\bbS$.
    To that end, we make the following assumptions: \\
    (\textbf{AS2}) $\bbS$ and $\barbS$ are close according to some metric $d(\bbS,\barbS)$, i.e., the observed perturbations are ``small'' in some sense. \\
    (\textbf{AS3}) Every $\bbH_k$ is a polynomial of $\bbS$. \\
    \vspace{-0.3cm}
\end{problem}

Assumption (\textbf{AS2}), which was also considered in \cref{P:problem_statement}, promotes the tractability of the problem by ensuring that $\bbS$ and $\barbS$ are related. As discussed in Sec.~\ref{S:graph_pert}, the distance function $d(\cdot, \cdot)$ must be selected depending on the perturbation model at hand. 
(\textbf{AS3}) captures the key fact that all the matrices $\bbH_k$ are GFs of the \emph{same} GSO, establishing a link that can be leveraged via a joint estimation (optimization) of the $K$ GFs. 
Implementing an approach similar to that in Sec.~\ref{S:rfi} (i.e., working on the vertex domain, considering the true GSO as an explicit optimization variable, accounting for the GF structure via a commutativity constraint, and assuming that the graph perturbations create and destroy links), the multi-filter counterpart to \eqref{eq:rfi_nonconvex_rew} that codifies \cref{P:multiple_filter_ir} is
\begin{alignat}{2}\label{eq:joint_rfi_noncvx_rew}
    \!&\! \min_{\bbS\in \ccalS, \{\bbH_k\}_{k=1}^K} && \sum_{k=1}^K  \alpha_k\|\bbY_k-\!\bbH_k\bbX_k\|_F^2 \!+\! \lambda r_{\delta_1}(\bbS - \barbS) \nonumber \\
    \!&\! &&  +\beta r_{\delta_2}(\bbS) + \sum_{k=1}^K \gamma \|\bbS\bbH_k \!\!-\! \bbH_k\bbS\|_F^2. 
\end{alignat}
Ideally, the value of the positive weight $\alpha_k$ must be selected based on the norm of $\bbW_k$ (e.g., prior information on the noise level and the number of signal pairs $M_k$). If none is available, then $\alpha_k=1$ for all $k$.
Equally important, the fact of pursuing a joint optimization implies that each $\bbH_k$ contributes with a regularization term $\|\bbS\bbH_k - \bbH_k\bbS\|_F^2$ promoting the commutativity of the $k$-th GF with the \emph{single} $\bbS$. Intuitively, having the same $\bbS$ in all these terms couples the optimization across $k$ and contributes to reduce the uncertainty over $\bbS$, leading to enhanced estimates of both $\bbS$ and $\{\bbH_k\}_{k=1}^K$.
As a result, the joint GF identification approach is expected to provide better results than estimating each $\bbH_k$ separately by solving $K$ instances of \eqref{eq:rfi_nonconvex_rew}.
We validate this hypothesis numerically via the experiments in Sec.~\ref{S:experiments}.

Following a motivation similar to that in the previous section, we deal with the non-convex minimization in \eqref{eq:joint_rfi_noncvx_rew} designing an alternating optimization algorithm that breaks the bilinear terms $\bbS\bbH_k$ and $\bbH_k\bbS$, and approximates the logarithmic terms with a linear upper-bound. The resulting algorithm solves iteratively the following two subproblems for $t=1,...,t_{\max}$ iterations.

\vspace{2mm}
\noindent \textbf{Step 1: Multiple GF Identification.}
Given the current estimate $\bbS^{(t)}$, we solve the optimization problem in \eqref{eq:joint_rfi_noncvx_rew} with respect to each $\bbH^{(k)}$.
This yields
\begin{alignat}{2}\label{eq:joint_filterid}
    \!\!\bbH_k^{(t+1)\!} \!\!=\!\argmin_{\bbH_k} \alpha_k \|\! \bbY_k \!\!-\!\! \bbH_k\bbX_k\!\|_{\!F}^{\!2}  \!\!+\! \gamma \|\! \bbS^{\!(t)\!}\bbH_k \!\!-\!\! \bbH_k\bbS^{\!(t)} \!\|_{\!F}^{\!2},\!
\end{alignat}
whose closed-form solution can be found using \eqref{eq:step1_closed_form} replacing $\gamma$ with $\gamma/\alpha_k$, $\bbX$ with $\bbX_k$, and $\bbY$ with $\bbY_k$.
Note that since the only coupling across GFs is via the GSO, \eqref{eq:joint_filterid} estimates each $\bbH_k^{(t+1)}$ separately from the other GFs, solving $K$ LS problems (each with $N^2$ unknowns). Furthermore, if multiple processors are available, \eqref{eq:joint_filterid} can be run in parallel across $k$.

\vspace{2mm}
\noindent \textbf{Step 2: Graph Denoising.}
Given the current estimates of the GFs $\{\bbH_k^{(t+1)}\}_{k=1}^K$, we follow an MM scheme that, minimizing a linear upper-bound of the logarithmic penalties, yields the estimate of the GSO via
\begin{align}\label{eq:joint_graph_denoising}
    \bbS^{(t+1)} =  \argmin_{\bbS \in \ccalS} &\sum_{ij=1}^N \big( \lambda \bar{\Omega}_{ij}^{(t)}|S_{ij}-\bar{S}_{ij}|+\beta \Omega_{ij}^{(t)}|S_{ij}| \big) \nonumber \\
     &+ \sum_{k=1}^K\gamma \|\bbS\bbH_k^{(t+1)} - \bbH_k^{(t+1)}\bbS\|_F^2,
\end{align}
where $\bbOmega$ and $\bar{\bbOmega}$ are obtained as in \eqref{eq:log_weights}.

The solution to \cref{P:multiple_filter_ir} is simply given by $\hbS = \bbS^{(t_{max})}$ and $\hbH_k = \bbH_k^{(t_{max})}$ for every $k$. Similar to \eqref{eq:rfi_nonconvex_rew}, convergence to a stationary point of \eqref{eq:joint_rfi_noncvx_rew} is guaranteed, as formally stated next.

\begin{corollary}\label{thm2}
    Denote as $f(\{\bbH_k\}_{k=1}^K, \bbS)$ the objective function in \eqref{eq:joint_rfi_noncvx_rew}.
    If $\bbz^{(t)} = [\vvec(\bbH_1^{(t)})^\top,...,\vvec(\bbH_K^{(t)})^\top, \vvec(\bbS)^\top]^\top$ represents the solution provided by the iterative algorithm \eqref{eq:joint_filterid}-\eqref{eq:joint_graph_denoising} after $t$ iterations and every $\bbX_k$ excites all graph frequencies, then $\bbz^{(t)}$ converges to a stationary point of $f$ as the number of iterations $t$ goes to infinity.
\end{corollary}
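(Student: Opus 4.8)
The plan is to show that the joint problem in \eqref{eq:joint_rfi_noncvx_rew} is, structurally, an instance of the very same two-block optimization already treated in \cref{thm1}, so that the convergence argument built on \cite[Th. 1b]{hong2015unified} carries over with only minor bookkeeping. First I would stack the filters into a single block $\bbH := [\vvec(\bbH_1)^\top, \ldots, \vvec(\bbH_K)^\top]^\top$ and regard $f(\{\bbH_k\}_{k=1}^K, \bbS)$ as a function of the two blocks $\bbH$ and $\bbS$. With this grouping, the data-fidelity term $\sum_k \alpha_k\|\bbY_k - \bbH_k\bbX_k\|_F^2$ and the coupling term $\sum_k \gamma\|\bbS\bbH_k - \bbH_k\bbS\|_F^2$ are smooth (quadratic) in $(\bbH,\bbS)$ with locally Lipschitz gradients, while the two logarithmic penalties $r_{\delta_1}(\bbS-\barbS)$ and $r_{\delta_2}(\bbS)$ are the only nonsmooth pieces and depend on the $\bbS$-block alone --- exactly the separation exploited in \eqref{eq:rfi_nonconvex_rew}. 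Step 1 [cf. \eqref{eq:joint_filterid}] then performs an exact minimization over the $\bbH$-block, and Step 2 [cf. \eqref{eq:joint_graph_denoising}] minimizes the MM surrogate obtained by linearizing the logarithmic penalties as in \eqref{eq:log_sparsity}--\eqref{eq:log_weights}, so the overall iteration is a block successive upper-bound minimization scheme of the type covered by the cited result.

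Next I would verify the hypotheses of \cite[Th. 1b]{hong2015unified}, the crucial one being that each block subproblem admits a unique minimizer. For the $\bbS$-block this is inherited essentially verbatim from \cref{thm1}: the surrogate is a reweighted-$\ell_1$ penalty plus the convex quadratic $\sum_k \gamma\|\bbS\bbH_k - \bbH_k\bbS\|_F^2$, and the majorizer satisfies the tightness and gradient-matching conditions required for the MM step, so nothing changes beyond summing finitely many commutativity terms. For the $\bbH$-block the key observation is that, with $\bbS$ fixed, the subproblem \eqref{eq:joint_filterid} \emph{decouples} across $k$ into $K$ independent quadratics, each of the form treated in \eqref{eq:step1_filterid}--\eqref{eq:step1_closed_form} with $\bbX_k$, $\bbY_k$, and $\gamma/\alpha_k$ in place of $\bbX$, $\bbY$, and $\gamma$. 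Hence strict convexity of the joint $\bbH$-update is equivalent to strict convexity of each of these $K$ quadratics, which is precisely what the hypothesis that every $\bbX_k$ excites all graph frequencies guarantees, by the same argument used in Prop.~\ref{thm3}. This establishes uniqueness of the minimizer for both blocks, and the remaining regularity conditions (continuity of the smooth part and boundedness of the generated sequence over the feasible set $\bbS \in \ccalS$) follow as in the proof of \cref{thm1}.

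I expect the only genuine obstacle to be the $\bbH$-block uniqueness bookkeeping: one must confirm that ``every $\bbX_k$ excites all graph frequencies'' is the correct multi-filter analogue of conditions i)--ii) in \cref{thm1}, and that it indeed renders the per-block Hessian $\bbX_k\bbX_k^\top \otimes \bbI + (\gamma/\alpha_k)(\ldots)$ positive definite for every $k$. Once this per-$k$ strict convexity is in hand, the block separability ensures that no coupling-induced degeneracy can arise through the shared $\bbS$, since $\bbS$ is frozen during Step 1. All other ingredients --- smoothness and the Lipschitz-gradient property of the quadratic part, validity of the logarithmic majorizer, and the stationarity characterization of the limit points --- are identical to those already established for \cref{thm1}, so the conclusion that $\bbz^{(t)}$ converges to a stationary point of $f$ follows by the same invocation of \cite[Th. 1b]{hong2015unified}.
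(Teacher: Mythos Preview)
Your proposal is correct and follows essentially the same route as the paper: reduce the multi-filter problem to the two-block structure of \cref{thm1}, verify the conditions of \cite[Th.~1b]{hong2015unified}, and obtain the needed uniqueness for the $\bbH$-block by observing that \eqref{eq:joint_filterid} decouples into $K$ independent instances of \eqref{eq:step1_filterid}, each handled by Prop.~\ref{thm3} under the excitation hypothesis. One small wording slip: you write ``uniqueness of the minimizer for both blocks,'' but condition (\textit{C4}) in the paper's framework requires uniqueness for \emph{at least one} block, and the $\bbS$-subproblem (a reweighted-$\ell_1$ problem) need not have a unique minimizer---your $\bbH$-block argument alone suffices, so this does not affect the validity of the proof.
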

The key to prove \cref{thm1}, which established the convergence to a stationary point for the robust estimation of a single GF, was to show that the optimization problem in \eqref{eq:rfi_nonconvex_rew} and the proposed algorithm satisfied the conditions in \cite[Th. 1b]{hong2015unified}. The formulation we put forth for the multi-filter case resembles closely that of the single-filter case, and, as a result, it is not difficult to show that those conditions also hold true for the problem in \eqref{eq:joint_rfi_noncvx_rew} (see App.~A for details). 

The discussion and formulations in Sec.~\ref{S:stationary_observations} dealing with incorporating additional information about the input-output signals into the optimization are also pertinent for the setup in this section. The details of such a formulation are omitted for brevity, but it will be explored in the experimental section.

\subsection{Joint GF identification for time series}
A slightly different, practically relevant, setup where multiple GFs need to be estimated is that of graph-based multivariate time series.
In that setup, each variable is associated with a node of the graph and the multiple graph-signal observations correspond to different instants of a time-varying graph signal. AR and moving-average (MA) modeling of time series has a long tradition, with common approaches to decrease the degrees of freedom 
including limiting the memory of the series and assuming that matrices of coefficients relating different time instants are low rank \cite{reinsel2003elements}.
In the context of graph signals and network processes, a natural approach is to constrain the matrices of coefficients to be GFs, all defined over the same graph \cite{mei2017causal,isufi19timeSeriesVarma}.
This section introduces a variation of the problem in \eqref{eq:joint_rfi_noncvx_rew} tailored to this setup. 

To introduce the multiple-graph identification problem formally, let $\bbX_\kappa$ and $\bbY_\kappa$ denote a collection of $M_\kappa$ graph signals corresponding to measurements of a network process for $\kappa=1,...,\kappa_{max}$ time instants.
Suppose now that $\bbY_\kappa$ can be accurately modeled by an AR dynamics with memory $K$ so, at every instant $\kappa$, the observations $\bbY_\kappa$ satisfy the equation
\begin{equation}\label{eq:ar_observation}
    \bbY_\kappa = \sum_{k=1}^K \bbH_k\bbY_{\kappa-k} + \bbX_\kappa,\;\mathrm{with}\;\bbH_k=\sum_{r=0}^{N-1}h_{r,k}\bbS^r, 
\end{equation}
where $\bbX_\kappa$ is the exogenous input, and the GF $\bbH_k$ models the influence that the signal observations from the time instant $\kappa-k$ exert on the (current) signal at time $\kappa$.

Suppose now that: i) we have access to an estimated (imperfect) graph $\barbS$, ii) the value of the graph signals at different time instants is available, and iii) our goal is to estimate the set of matrices (GFs) $\{\bbH_k\}_{k=1}^K$ in \eqref{eq:ar_observation} that describe the dynamics of the multivariate time series. This can be accomplished as
\begin{alignat}{2}\label{eq:joint_rfi_noncvx_rew_ar}
    \!&\! \min_{\bbS\in \ccalS, \{\bbH_k\}_{k=1}^K}  \sum_{\kappa=K+1}^{\kappa_{max}}\Big\|\bbY_\kappa-\bbX_\kappa -\!\sum_{k=1}^K\bbH_k\bbY_{\kappa-k}\Big\|_F^2   \nonumber \\
    \!&\! \hspace{0.6cm} +\! \lambda r_{\delta_1}(\bbS - \barbS) \!+\! \beta r_{\delta_2}(\bbS) \!+\! \sum_{k=1}^K \gamma \|\bbS\bbH_k \!\!-\! \bbH_k\bbS\|_F^2. \!
\end{alignat}
The main difference relative to \eqref{eq:joint_rfi_noncvx_rew} is in the first term, which accounts for the new observation model [cf. \eqref{eq:observation_model_multi} vs. \eqref{eq:ar_observation}]. Note that we assume that the exogenous input $\bbX_\kappa$ is observed. If that were not the case, it would suffice to remove $\bbX_\kappa$ from the objective (possibly updating the Frobenius norm in case statistical knowledge about $\bbX_\kappa$ were available). Albeit the differences, the problem in \eqref{eq:joint_rfi_noncvx_rew_ar} is closely related to \eqref{eq:joint_rfi_noncvx_rew}, with the sources of non-convexities being the same. As a result, we approach its solution with a modified version of Alg.~\ref{A:rfi_alg} which, at each iteration $t$, runs two steps. In the first one, we estimate each of the $K$ GFs by solving
\begin{alignat}{2}\label{eq:joint_filterid_time}
    \!& \bbH_k^{(t+1)} \!&&\!=\!\argmin_{\bbH_k}  \!\!\!\sum_{\kappa=K+1}^{\kappa_{max}}\!\!     \Big\|\bbY_\kappa\!-\!\bbX_\kappa\!-\!\bbH_k\bbY_{\kappa-k}\!-\!\!\!\sum_{k'<k}\!\!\bbH_{k'}^{(t+1)}\bbY_{\kappa-k'} \nonumber\\
            \!\!& \! &&\! \!-\!\! \sum_{k'<k}\!\!\bbH_{k'}^{(t)}\bbY_{\kappa-k'} \Big\|_{\!F}^{\!2}  +\!\! \sum_{k=1}^K\!\gamma \Big\| \bbS^{(t)}\bbH_k - \bbH_k\bbS^{(t)} \Big\|_{\!F}^{\!2},\!\!
\end{alignat}
which is different from the previous GF identification step [cf. \eqref{eq:joint_filterid}]. In contrast, the graph-denoising step in \eqref{eq:joint_graph_denoising} remains the same. Note that \eqref{eq:joint_filterid_time} updates each GF separately in a cyclic way by solving an LS problem with $N^2$ unknowns. Alternative implementations include using $\bbH_{k'}^{(t)}$ in lieu of $\bbH_{k'}^{(t+1)}$ for all $k'< k$ (so that a parallel implementation is enabled) as well as considering a single LS problem with $KN^2$ unknowns. 

Finally, it is worth emphasizing that the formulation introduced in this section can be used as a starting point to design more general robust schemes for multivariate time series defined over a graph. Dealing with both AR and MA matrices, assuming that the memory of the system is not known, having only partial/statistical information on the exogenous input, and observing the signals at only a subset of nodes are all examples of setups of interest. Since our goal in this section was to demonstrate the relevance of a robust multiple GF formulation in the context of multivariate time series, to facilitate exposition we restricted our discussion to the relatively simple case in \eqref{eq:ar_observation},  but many other setups (including those previously listed) will be subject of our future work.

\section{Efficient implementation of the robust GF identification algorithm}\label{S:efficient_rfi}
The algorithms proposed up to this point are able to find a solution to the robust GF identification problem in polynomial time. However, their computational complexity scales with the number of nodes as $N^7$. To facilitate the deployment in setups where $N$ is large, this section puts forth an efficient implementation that reduces the number of operations.

The new algorithm 
(summarized in Alg.~\ref{A:efficient_rfi_alg} ) preserves the core structure of Alg.~\ref{A:rfi_alg}, with an outer loop that, at each iteration, runs two steps: one involving the estimation of the GF(s) and another one dealing with the estimation of the GSO.
The main difference is that now, instead of finding the exact solution to those two problems, we obtain an approximate solution. While the details, which are step-dependent, will be specified in the next paragraphs, the overall idea is that for each of the steps we run a few simple (gradient/proximal) iterations. Although Alg.~\ref{A:efficient_rfi_alg} involves two nested loops, the complexity of the problems in the inner loop is cut down significantly, so that the overall computational overhead is reduced. 

To be specific, we describe next the two steps that, at each iteration of the outer loop $t \!=\! 0,...,t_{max}\!-\!1$, Alg.~\ref{A:efficient_rfi_alg} runs.

\vspace{2mm}
\noindent \textbf{Step 1: Efficient GF Identification.}
Solving the GF-identification step with the closed-form solution presented in \eqref{eq:step1_closed_form} involves inverting a matrix of size $N^2 \times N^2$, which requires $\ccalO(N^6)$ operations.
To explain our alternative implementation, let $f_1(\bbH|\bbS^{(t)})$ denote the objective function in \eqref{eq:step1_filterid}. Since $f_1$ is strictly convex and smooth, it can be efficiently optimized using a gradient descent approach~\cite{boyd2004convex}.

To that end, for each iteration $t$ of the outer loop, we define the inner iteration index  $\tau$ as well as the sequence of variables $\bchkH^{(\tau)}$ with $\tau = 0,...,\tau_{max_1}$, which is initialized as $\bchkH^{(0)} = \bbH^{(t)}$.
With this notation at hand, at each iteration $\tau=0,...,\tau_{max_1}-1$ of the inner loop, we update $\bchkH^{(\tau+1)}$ via
\begin{equation}
    \bchkH^{(\tau+1)} = \bchkH^{(\tau)} - \mu \nabla f_1(\bchkH^{(\tau)}|\bbS^{(t)}).
\end{equation}
Here, $\mu > 0$ is the step size and $\nabla f_1$ denotes the gradient of $f_1$ with respect to $\bbH$, which is given by
\begin{alignat}{2}\
 &\! \nabla f_1\!(\bbH|\bbS^{(t)}) \!= \!2\Big(\bbH\bbX\bbX^\top \!\!-\!\! \bbY\bbX^\top \Big)\!\!\nonumber\\
    &\hspace{0.2cm}+\!2\gamma\Big( \!\bbS^{(t)^{ \!\top}}  \!\!(\bbS^{(t)}\bbH \!-\! \bbH\bbS^{(t)}) \!-\! (\bbS^{(t)}\bbH \!-\! \bbH\bbS^{(t)})\bbS^{(t)^\top} \!\Big).
\end{alignat}
When the $\tau_{max_1}$ gradient updates are computed, we conclude the GF-identification step by setting $\bbH^{(t+1)} = \bchkH^{(\tau_{max_1})}$.

Since each gradient calculation involves the multiplication of $N \times N$ matrices, the resultant computational complexity is $\ccalO(\tau_{max_1}N^3)$, which may go down to $\ccalO(\tau_{max_1}N^{2.4})$ if an efficient multiplication algorithm is employed~\cite{coppersmith1987matrix}.
For large values of $N$, this complexity is substantially smaller than that required to find the inverse of an $N^2 \times N^2$ matrix.

\vspace{2mm}
\noindent \textbf{Step 2: Efficient graph denoising.}
Since the optimization in \eqref{eq:step2_graph_denoising} involves $N^2$ variables (the entries in $\bbS$), using an off-the-shelf convex solver incurs a computational complexity of $\ccalO(N^7)$ \cite{boyd2004convex}.
Inspired by the Lasso regression algorithm~\cite{hastie2015statistical}, we optimize individually over each entry $S_{ij}$ in an iterative manner.
The main idea is running multiple rounds of $N^2$ efficient scalar optimizations rather than dealing with a single but demanding $N^2$-dimensional problem. To provide the details of the scheme developed to estimate $\bbS$, we need to specify the set of constraints $\ccalS$ and introduce some definitions.
Let us focus on the set of adjacency matrices $\ccalS_\ccalA := \{ \bbS | S_{ij} \geq 0,\; S_{ii} = 0 \}$ and define the vectors $\bbs := \vvec(\bbS)$, vector $\barbs := \vvec(\barbS)$, and the matrix $\bbSigma^{(t)} := \bbH^{(t+1)^\top} \oplus -\bbH^{(t+1)}$.
With these definitions in place, the minimization in \eqref{eq:step2_graph_denoising} is equivalent to solving
\begin{alignat}{2}\label{eq:efficient_step2}
    \!&\!\min_{\bbs} && \sum_{i=1}^{N^2} \left(  \lambda \bar{\omega}^{(t)}_i|s_i - \bar{s}_i| + \beta \omega^{(t)}_i s_i \right) + \gamma \|\bbSigma^{(t)}\bbs\|_2^2, \nonumber \\
    \!&\! \mathrm{\;\;s. \;to: } && \;\; \bbs \geq 0, \;\; \bbs_\ccalD = 0,
\end{alignat}
where $\bbs_\ccalD$ collects the elements in the diagonal of $\bbS$, and the vectors $\bar{\bbomega}^{(t)}$ and $\bbomega^{(t)}$ are computed according to \eqref{eq:log_weights} but with $\barbs^{(t)}$ and $\bbs^{(t)}$ in lieu of $\barbS^{(t)}$ and $\bbS^{(t)}$. The constraint $\bbs_\ccalD = 0$, implies that only the $N^2-N$ elements of $\bbs$ representing the off-diagonal entries of $\bbS$ need to be optimized. The key point to find those $N^2-N$ values is to leverage that the non-differentiable part of the cost in \eqref{eq:efficient_step2} is separable across $s_i$, postulate $N^2-N$ scalar optimization problems (coupled via the $\ell_2$ term in the cost), and address the optimization  following a projected cyclic coordinate descent scheme.

To define clearly the operation of Step 2 at each iteration $t$ of the outer loop, we need to introduce some notation. First, let us denote as $\tau$ the iteration index for the inner loop, define the sequence of variables  $\bchks^{(\tau)}$ where $\tau=0,...,\tau_{max_2}$, and initialize the sequence as $\bchks^{(0)} = \bbs^{(t)}$. Moreover, with $\ell \not\in \ccalD$ denoting an index of the off-diagonal elements of the GSO, let $\bbsigma_\ell \in \reals^{N^2}$ denote the associated $\ell$-th column of $\bbSigma^{(t)}$, $\omega_\ell\geq 0$ and $\bar{\omega}_\ell\geq 0$ the associated entries of $\bbomega^{(t)}$ and $\bar{\bbomega}^{(t)}$, and $\check{s}_\ell^{(\tau)}\in\reals$ the associated entry of $\bchks^{(\tau)}$ (note that dependence on $t$ was omitted to facilitate readability). Then, at every iteration $\tau=0,...,\tau_{max_2}-1$ of the inner loop, Alg.~\ref{A:efficient_rfi_alg} optimizes over each $\check{s}_\ell$ separately in a cyclic (successive) way.
The advantage of this approach is that the solution to the \emph{scalar} optimization over $\check{s}_\ell$ is given in closed form by the following projected soft-thresholding operation
\begin{equation}\label{eq:soft-thresholding}
    \check{s}_\ell^{(\tau+1)} = \left\{\hspace{-2mm} \begin{array}{cl}
        \left( - \bar{\lambda}_\ell + u^{(\tau)}_\ell  \right)^+ 
        &  \mathrm{if} \; \bar{s}_\ell <  - \bar{\lambda}_\ell+u^{(\tau)}_\ell,\\
        \left(\bar{\lambda}_\ell + u^{(\tau)}_\ell  \right)^+
        & \mathrm{if} \; \bar{s}_\ell >   \bar{\lambda}_\ell+u^{(\tau)}_\ell, \\
        \bar{s}_\ell & \mathrm{otherwise},
    \end{array} \right.
\end{equation}
\begin{equation}
    \mathrm{with} \;\;\; \bar{\lambda}_\ell = \frac{\lambda \bar{\omega}_\ell}{\gamma \bbsigma_\ell^\top\bbsigma_\ell}\;\; 
    \mathrm{and} \;\;\; u^{(\tau)}_\ell = \frac{-\beta \omega_\ell - \gamma\bbsigma_\ell^\top \bbr_\ell^{(\tau)} }{\gamma \bbsigma_\ell^\top\bbsigma_\ell} . \nonumber
\end{equation}
Here, $(\cdot)^+$ denotes the operation $(x)^+ = \max(0,x)$, and
\begin{equation}\label{eq:r_l}
    \bbr_\ell^{(\tau)} := \sum_{j < \ell} \bbsigma_j \check{s}_j^{(\tau+1)} + \sum_{j > \ell} \bbsigma_j \check{s}_j^{(\tau)}.
\end{equation}
Note that \eqref{eq:soft-thresholding} is a soft-thresholding operation with respect to the term $|s_i - \bar{s}_i|$.
Also, the constraints in $\ccalS_\ccalA$ are satisfied due to the projection operator $(\cdot)^+\!:=\!\max\{\cdot,\!0\}$, and because we do not optimize over the elements of the diagonal of $\bbS$.

At first sight, computing each $\check{s}_\ell$ requires roughly $N^2$ operations, so estimating the whole vector $\bbs$ would entail a computational complexity of $\ccalO(N^4)$.
However, a closer inspection of the vectors $\bbsigma_\ell$ reveals that no more than $2N$ of their entries are non-zero because $\bbsigma_\ell$ are the columns of the Kronecker sum of two $N \times N$ matrices.
We exploit this sparsity and reduce the number of operations required to compute each $s_\ell$ to approximately $2N$, rendering the final computational complexity of the graph denoising step to $\ccalO(2\tau_{max_2}N^3)$.

\begin{algorithm}[tb]
\footnotesize
\SetKwInput{Input}{Input}
\SetKwInOut{Output}{Output}
\Input{$\bbX$, $\bbY$, $\barbS$}
\Output{$\hbH$, $\hbS$.}
\SetAlgoLined
Initialize $\bbH^{(0)}$ and $\bbS^{(0)}$ \\
$\barbs = \vvec(\barbS)$ \\
\For{$t=0$ \KwTo $t_{max}-1$}{
    \tcp{GF-identification step}
    $\bchkH^{(0)} = \bbH^{(t)}$ \\
    \For{$\tau=0$ \KwTo $\tau_{max_1}-1$}{
        $\bchkH^{(\tau+1)} = \bchkH^{(\tau)} + \mu \nabla f_1(\bchkH^{(\tau)}|\bbS^{(t)})$ \\
    }
    $\bbH^{(t+1)} = \bchkH^{(\tau_{max_1})}$ \\
    
    \vspace{.2cm}
    \tcp{Graph denoising step}
    $[\bbsigma_1,...,\bbsigma_{N^2}] =\bbH^{(t+1)^\top} \oplus \bbH^{(t+1)}$ \\ 
    $\bchks^{(0)} = \vvec(\bbS^{(t)})$ \\
    Update $\bar{\bbomega}^{(t)}$, $\bbomega^{(t)}$ via \eqref{eq:log_weights} using $\barbs$ and $\bchks^{(0)}$ \\
    
    \For{$i=0$ \KwTo $\tau_{max_2}-1$}{
    \For{$\ell \not\in \ccalD$}{
    Obtain $\bbr_\ell^{(\tau)}$ via \eqref{eq:r_l} \\
    Obtain $\check{s}_\ell^{(\tau+1)}$ via \eqref{eq:soft-thresholding} using $\bbsigma_\ell$, $\bbr_\ell^{(\tau)}$, $\omega_\ell$, $\bar{\omega}_\ell$ \\
    }
    }
    $\bbS^{(t+1)} = \mathrm{unvec}(\bchks^{(\tau_{max_2})})$
}
$\hbH = \bbH^{(t_{max})},\; \hbS = \bbS^{(t_{max})}$.
\caption{\small Reduced-complexity robust GF identification.}
\label{A:efficient_rfi_alg}
\end{algorithm}

The pseudocode describing the efficient implementation of Steps~1 and~2 is provided in Alg.~\ref{A:efficient_rfi_alg}.
The summary is as follows. We postulate a nested algorithm with two loops. The outer loop runs $t_{\max}$ iterations. The inner loop runs two steps: Step~1, with $\tau_{max_1}$ iterations, and Step 2, with $\tau_{\max_2}$ iterations. While the complexity for Alg.~\ref{A:rfi_alg} scaled as $\ccalO(t_{\max} N^7)$, with $t_{\max}$ being typically small, the overall computational complexity of Alg.~\ref{A:efficient_rfi_alg} is roughly $\ccalO(t_{\max}(\tau_{max_1}+\tau_{max_2})N^3)$, which is encouraging, since $2N^2$ variables are optimized and it scales with $N$ significantly better than Alg.~\ref{A:rfi_alg}. Solving Steps 1 and 2 optimally requires setting large values for $\tau_{\max_1}$ and $\tau_{\max_2}$. Nonetheless, we observe that in most tested setups the approach of setting small values for $\tau_{\max_1}$ and $\tau_{\max_2}$ (at the cost of setting a slightly higher value for $t_{\max}$) typically yields a faster convergence. Finally, implementations where the number of iterations is not fixed but selected based on some convergence criterion are also sensible alternatives. 

We close the section noting that we developed Alg.~\ref{A:efficient_rfi_alg} for the setting described in \cref{P:problem_statement} because the notation was simpler and facilitated the discussion. Nonetheless, an analogous approach may be followed for the joint estimation of $K$ GFs (cf. Sec.~\ref{S:rfi_joint}), resulting in an algorithm with complexity per GF similar to that for Alg.~\ref{A:efficient_rfi_alg}.


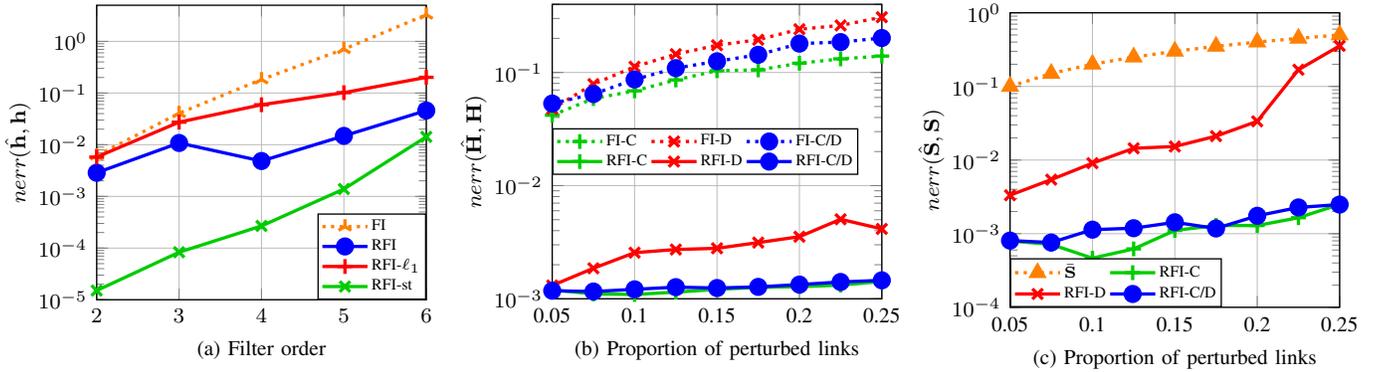
\begin{figure*}[!t]
	\centering
	\begin{subfigure}{0.32\textwidth}
		\centering
		 \begin{tikzpicture}[baseline,scale=1]

\begin{semilogyaxis}[
    xlabel={(a) Filter order},
    xmin={2},
    xmax={6},
    ylabel={$nerr(\hbh,\bbh)$},
    ymin={1e-5},
    ymax={5},
    ytick={1e-5,1e-4,1e-3,1e-2,1e-1,1},
    grid=major,
    legend style={
        at={(1,0)},
        anchor=south east}
    ]
    
    \pgfplotstableread{data/1-FilterOrder_decay05.csv}\timetable
    
    \addplot[orange, dotted, mark=Mercedes star] table [x={Filter Order}, y=FI] {\timetable};
    \addplot[blue, solid, mark=*] table [x={Filter Order}, y=ITER-REW-NONST] {\timetable};
    \addplot[red, solid, mark=+] table [x={Filter Order}, y=ITER-NONST] {\timetable};
    \addplot[green!80!black, solid, mark=x] table [x={Filter Order}, y=ITER-REW-ST-REAL] {\timetable};
    
    \legend{FI, RFI, RFI-$\ell_1$, RFI-st}
    
\end{semilogyaxis}
\end{tikzpicture}
	\end{subfigure}
	\begin{subfigure}{0.32\textwidth}
		\centering
		\begin{tikzpicture}[baseline,scale=1]

\begin{semilogyaxis}[
    xlabel={(b) Proportion of perturbed links},
    xmin={0.05},
    xmax={0.25},
    xtick={.05, .1, .15, .2, .25},
    xticklabels={0.05, 0.1, 0.15, 0.2, 0.25},
    ylabel={$nerr(\hbH,\bbH)$},
    ymin={1e-3},
    ymax={0.4},
    grid=major,
    legend style={
        at={(0,0.5)},
        anchor=west},
    legend columns=3,
    ]
    
    \pgfplotstableread{data/pert_type_rf_errH.csv}\errRFtable
    \pgfplotstableread{data/pert_type_rfi_errH.csv}\errRFItable
    
    \addplot[green!80!black, dotted, mark=+] table [x=perts, y=RF-creat] {\errRFtable};
    \addplot[red, dotted, mark=x] table [x=perts, y=RF-dest] {\errRFtable};
    \addplot[blue, dotted, mark=*] table [x=perts, y=RF-creat-dest] {\errRFtable};
    \addplot[green!80!black, mark=+] table [x=perts, y=RFI-creat] {\errRFItable};
    \addplot[red, mark=x] table [x=perts, y=RFI-dest] {\errRFItable};
    \addplot[blue, mark=*] table [x=perts, y=RFI-creat-dest] {\errRFItable};
    
    \legend{FI-C, FI-D, FI-C/D, RFI-C, RFI-D, RFI-C/D}
    
\end{semilogyaxis}
\end{tikzpicture}
	\end{subfigure}
	\begin{subfigure}{0.32\textwidth}
		\centering
		\begin{tikzpicture}[baseline,scale=1]

\begin{semilogyaxis}[
    xlabel={(c) Proportion of perturbed links},
    xmin={0.05},
    xmax={0.25},
    xtick={.05, .1, .15, .2, .25},
    xticklabels={0.05, 0.1, 0.15, 0.2, 0.25},
    ylabel={$nerr(\hbS,\bbS)$},
    ymin={1e-4},
    ymax={1},
    grid=major,
    legend style={
        at={(0,0)},
        anchor=south west},
    legend columns=2,
    ]
    
    \pgfplotstableread{data/pert_type_rf_errS.csv}\errRFtable
    \pgfplotstableread{data/pert_type_rfi_errS.csv}\errRFItable
    
    \addplot[orange, dotted, mark=triangle*] table [x=perts, y=RF-creat] {\errRFtable};
    \addplot[green!80!black, mark=+] table [x=perts, y=RFI-creat] {\errRFItable};
    \addplot[red, mark=x] table [x=perts, y=RFI-dest] {\errRFItable};
    \addplot[blue, mark=*] table [x=perts, y=RFI-creat-dest] {\errRFItable};
    
    \legend{$\barbS$, RFI-C, RFI-D, RFI-C/D}
    
\end{semilogyaxis}
\end{tikzpicture}
	\end{subfigure}
		\vspace{-0.3cm}
	\caption{Assessing the performance of the robust GF identification algorithm and the impact of perturbations in the topology.
	(a) shows the error of estimating $\hbh$ as the order of the GF increases; (b) and (c) respectively show the error of estimating $\hbH$ and $\hbS$ using a robust or a non-robust approach for several types of perturbations.}	\vspace{-0.5cm} \label{F:exps1}
\end{figure*}

\section{Numerical results}\label{S:experiments}
This section discusses several numerical experiments to gain insights and assess the performance of the robust GF identification algorithms.
Unless specified otherwise, for a variable of interest $\bbTheta$, we report its normalized estimation error defined as
\begin{equation}\label{eq:rel_err}
    nerr(\hbTheta, \bbTheta) := {\| \hbTheta - \bbTheta \|_F^2}/{\|\bbTheta\|_F^2},    
\end{equation}
where $\hbTheta$ and $\bbTheta$ denote the estimated and the true value, respectively.
The code implementing our algorithms and the experiments presented next is available on GitHub\footnote{\url{https://github.com/reysam93/graph_denoising}}.
The interested reader is referred there for additional details and tests.

\subsection{Synthetic experiments}
We start by evaluating our algorithms with synthetic data, which is key to gain intuition.
Unless otherwise stated, graphs are sampled from an Erd\H{o}s Rényi (ER) random graph model with a link probability of $p=0.2$ and $N=20$ nodes; $\barbS$ is obtained by randomly creating and destroying 10\% of the links in $\bbS$; $M=50$ signals $\bbX$ and $\bbY$ are generated according to \eqref{eq:observation_model},  with the columns of $\bbX$ being drawn from a multivariate Gaussian distribution $\ccalN(\mathbf{0},\bbI)$, so the signals $\bbY$ are stationary on $\bbS$; signals in $\bbY$ are corrupted with white Gaussian noise with a normalized power of $\eta_\bbW=0.05$; and the reported error corresponds to the median of $nerr$ across 64 realizations of graphs and graph signals.

\vspace{2mm}
\noindent\textbf{Test case 1.} The first experiment evaluates the influence of perturbations as the order of the GF $R$ increases.
The number of observed pairs of signals considered is $M=100$ and 10\% of the edges in $\bbS$ are perturbed.
Results are reported in Fig.~2(a), where the x-axis shows $R$ and the y-axis $nerr(\hbh,\bbh)$.
The algorithms considered are: (i) the GF identification algorithm that ignores perturbations [see \eqref{eq:solving_fi}], denoted as ``FI''; (ii) the robust GF identification algorithm from Alg.~\ref{A:rfi_alg} (``RFI''); (iii) a variation of ``RFI'' where the reweighted $\ell_1$ norm is replaced by the standard $\ell_1$ norm (``RFI-$\ell_1$''); and (iv) the robust GF identification algorithm accounting for the stationarity of $\bbY$ (``RFI-ST'').
First, we observe that the error of the ``FI'' algorithm, while small for low values of $R$, increases rapidly as $R$ grows. 
This is aligned with the discussion of high-order polynomials in Sec.~\ref{S:perturbed_graph_filters} and illustrates the merits of the robust algorithms.
Moreover, ``RFI-ST'' presents the best performance illustrating the importance of exploiting additional structure when it is available.
Finally, comparing the error of ``RFI'' and ``RFI-$\ell_1$'' showcases the benefits of replacing the $\ell_1$ norm with its reweighted version.

\vspace{2mm}
\noindent\textbf{Test case 2.}
The next experiment tests the influence of different types of perturbations in the robust and non-robust GF identification algorithms.
Figs.~2(b) and~2(c) illustrate the error of the estimated GF $\hbH$ and the denoised GSO $\hbS$ 
as the ratio of perturbed links in $\barbS$ increases.
Graphs are sampled from the small world~\cite{newman1999renormalization} random graph model and $\barbS$ is obtained by creating new links, destroying existing links, or simultaneously creating and destroying links, which are respectively denoted as ``C'', ``D'', and ``C/D'' in the legend.
Since the non-robust ``FI'' algorithm does not perform graph denoising we show the error $nerr(\barbS, \bbS)$, denoted as ``$\barbS$'' in Fig.~2(c).
Furthermore, because the number of perturbed links is fixed, the error of $\barbS$ is the same for the considered perturbations and it is only plotted once.
From the figures, we observe that destroying links is the most harmful perturbation, especially when the focus is on $\hbS$.
This may be explained because destroying links is prone to produce non-connected graphs.
Nevertheless, the results show the resilience of the ``RFI'' algorithm, which provides low-error estimates $\hbH$ and $\hbS$ even when more than 20\% of the links are perturbed.

\begin{figure*}[!t]
	\centering
	\begin{subfigure}[t]{0.32\textwidth}
		\centering
		 \begin{tikzpicture}[baseline,scale=1]
\begin{semilogyaxis}[
    xlabel={(a) Proportion of perturbed links},
    xmin={0},
    xmax={0.3},
    ylabel={$nerr(\hbH,\bbH)$},
    ymin={1e-5},
    ymax={1},
    ytick={1e-5,1e-4,1e-3,1e-2,1e-1,1},
    grid=major,
    legend style={
        at={(1,0)},
        anchor=south east},
    legend columns=2]
    
    \pgfplotstableread{data/4-ComparisonTLS-LLS.csv}\timetable
    
    \addplot[blue, solid, mark=*] table [x={Adjacency perturbation}, y=RFI-iter-H] {\timetable};
    \addplot[blue, dotted, mark=*] table [x={Adjacency perturbation}, y=RFI-iter-SEM] {\timetable};
    \addplot[red, solid, mark=x] table [x={Adjacency perturbation}, y=TLS-SEM-H] {\timetable};
    \addplot[red, dotted, mark=x] table [x={Adjacency perturbation}, y=TLS-SEM-SEM] {\timetable};
    \addplot[green!80!black, solid, mark=+] table [x={Adjacency perturbation}, y={LLS-SCP (Natali)-H}] {\timetable};
    \addplot[green!80!black, dotted, mark=+] table [x={Adjacency perturbation}, y={LLS-SCP (Natali)-SEM}] {\timetable};
    
    \legend{{RFI, H}, {RFI, SEM}, {TLS-SEM, H}, {TLS-SEM, SEM}, {SCP, H}, {SCP, SEM}}

\end{semilogyaxis}
\end{tikzpicture}
	\end{subfigure}
	\begin{subfigure}[t]{0.32\textwidth}
		\centering
		\begin{tikzpicture}[baseline,scale=1]

\begin{semilogyaxis}[
    table/col sep=comma,
    xlabel={(b) Number of nodes},
    xmin={20},
    xmax={100},
    ylabel={time (s)},
    ymax={1e5},
    ytick={1,1e1,1e2,1e3,1e4,1e5},
    grid=major,
    legend style={
        at={(0,1)},
        anchor=north west,
        },
    ]
    
    \pgfplotstableread{data/eff_alg_mean_times.csv}\timetable
    
    \addplot[blue, solid, mark=*] table [x=x, y=alg1] {\timetable};
    \addplot[red, loosely dotted, mark=x] table [x=x, y=alg2] {\timetable};
    \addplot[orange, dotted, mark=x] table [x=x, y=alg3] {\timetable};
    \addplot[green!80!black, densely dotted, mark=x] table [x=x, y=alg4] {\timetable};
    
    \legend{Stand-5, Eff-5-10, Eff-5-25, Eff-5-50}
    
\end{semilogyaxis}
\end{tikzpicture}
	\end{subfigure}
	\begin{subfigure}[t]{0.32\textwidth}
		\centering
		\begin{tikzpicture}[baseline,scale=1]

\begin{axis}[
    table/col sep=comma,
    xlabel={(c) Number of nodes},
    xmin={20},
    xmax={100},
    ylabel={$nerr(\hbH,\bbH)$},
    ymin={0},
    ymax={0.4},
    grid=major,
    legend style={
        at={(0,1)},
        anchor=north west}
    ]
    
    \pgfplotstableread{data/eff_alg_mean_err_H.csv}\timetable
    
    \addplot[blue, solid, mark=*] table [x=x, y=alg1] {\timetable};
    \addplot[red, loosely dotted, mark=x] table [x=x, y=alg2] {\timetable};
    \addplot[orange, dotted, mark=x] table [x=x, y=alg3] {\timetable};
    \addplot[green!80!black, densely dotted, mark=x] table [x=x, y=alg4] {\timetable};
    
    \legend{Stand-5, Eff-5-10, Eff-5-25, Eff-5-50}
    
\end{axis}
\end{tikzpicture}
	\end{subfigure}
		\vspace{-0.6cm}
	\caption{Comparing the performance of several robust GF identification algorithms.
	(a) shows the error of $\hbH$ when estimated with the proposed algorithm and with other baselines as the ratio of perturbed links increases.
	Different graph-signal models are considered.
	(b) and (c) respectively show the running time and error of $\hbH$ using Alg.~\ref{A:rfi_alg} and Alg.~\ref{A:efficient_rfi_alg} as the number of nodes increases.
	Different values for the maximum number of iterations of the inner loops are considered.}	\vspace{-0.5cm} \label{F:exps2}
\end{figure*}
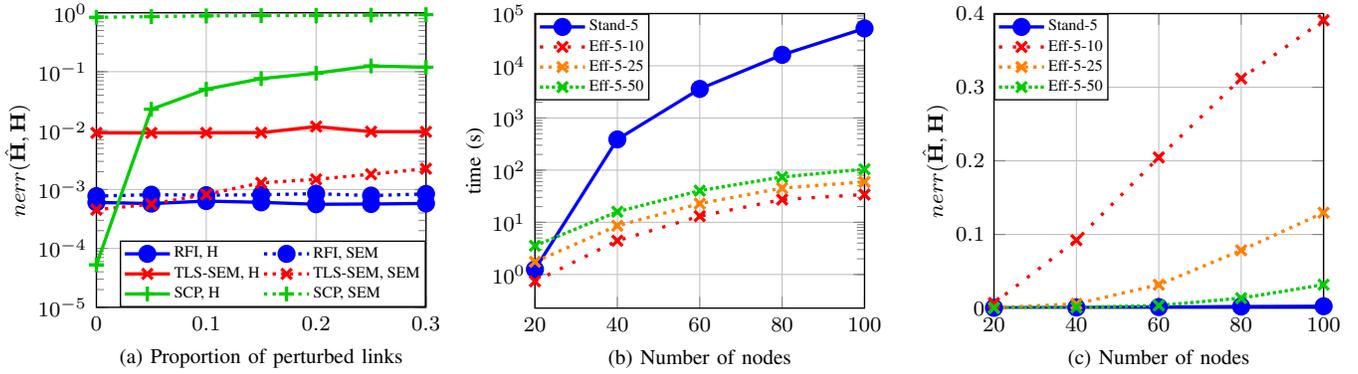

\vspace{2mm}
\noindent\textbf{Test case 3.}
Next, we compare the performance of our algorithms with other robust alternatives.
Fig.~3(a) reports, for each algorithm, $nerr(\hbH,\bbH)$ as the ratio of perturbed links increases.
The baselines considered are the TLS-SEM algorithm from \cite{ceci2020_semtls}, and LLS-SCP from \cite{natali2020topology}.
We note that the TLS-SEM algorithm is tailored to graph signals following a SEM of the form 
\begin{equation}\label{eq:sem_model}
    \bbY=\bbA\bbY+\bbX = (\bbI-\bbA)^{-1}\bbX,
\end{equation}
where the observations at the $i$-th node are represented by the values of the neighbors of $i$ and an exogenous input.
As a result, the TLS-SEM algorithm may not be well suited to deal with signals generated according to the more general model in \eqref{eq:observation_model}.
Taking this into account, to offer a more favorable comparison we consider two types of graph signals: (i) signals generated according to \eqref{eq:sem_model}, denoted as ``SEM''; and (ii) signals generated according to \eqref{eq:observation_model}, denoted as ``H''.
It is worth noting that the ``SEM'' can be considered as a particular case of the model ``H'' when the GF $\bbH_{SEM} = (\bbI-\bbA)^{-1}$ is employed.

Looking at the results in Fig.~3(a) we observe the following.
When the ``SEM'' model is considered, TLS-SEM (denoted as ``TLS-SEM'') obtains the best performance when the perturbation probability is small, and then, the performance of ``TLS-SEM'' and that of the ``RFI'' algorithm become comparable.
This illustrates that our algorithm is especially suitable to deal with a large number of perturbed links.
On the other hand, when the ``H'' model is considered, we observe that the ``RFI'' algorithm consistently outperforms the baselines in the presence of perturbations.
The good performance of the ``RFI'' algorithm on both signal models highlights the flexibility of the proposed formulation since it considers more lenient assumptions than the other alternatives.

\vspace{2mm}
\noindent\textbf{Test case 4.}
Now, we compare the performance of the standard and the efficient implementation of the robust identification algorithm, as described in Algs.~\ref{A:rfi_alg} and \ref{A:efficient_rfi_alg}.
The results are shown in Figs.~3(b) and 3(c), where the figures depict the running time measured in seconds and $nerr(\hbH,\bbH)$ as $N$ increases.
The legend identifies first the algorithm employed, then the number of iterations of the outer loop ($t_{max}$), and finally the iterations of the inner loops (with $\tau_{{max_1}}=\tau_{{max_2}}$).
As expected, Fig.~3(b) shows that Alg.~\ref{A:efficient_rfi_alg} is remarkably faster than Alg.~\ref{A:rfi_alg} even with medium-sized graphs, achieving a running time $10^3$ times smaller when $N=100$.
On the other hand, in Fig.~3(c) we observe that ``Eff-5-50'' has an error that is close to the standard implementation (``Stand-5'') even though it is considerably faster.
Furthermore, the trade-off between speed and estimation accuracy is also evident.
``Eff-5-10'' is the fastest implementation but the quality of its estimated GF may not be enough for graphs with more than 40 nodes.

\vspace{2mm}
\noindent\textbf{Test case 5.} 
The last experiment with synthetic data studies the benefits of the joint GF estimation.
All the GFs are polynomials of the same $\bbS$, and for each $\bbH_k$ we consider $M_k = 15$ noisy observations with $\eta_\bbw = 0.01$.
Fig.~4 shows the results, with the y-axis being the normalized error averaged across the $K$ graphs, i.e., $\frac{1}{K}\sum_{k=1}^K nerr(\hbH_k, \bbH_k)$, and the x-axis representing $K$.
We compare the performance of estimating the GFs jointly (marked as ``J'' in the legend) or separately for the three algorithms (``RFI-$\ell_1$'', ``RFI'', and ``RFI-st'') described in Test case~1.
Note that ``RFI-J'' corresponds to the formulation in \eqref{eq:joint_rfi_noncvx_rew}.
The first thing we observe from the results in Fig.~4 is that the error decreases as $K$ increases when a joint algorithm is employed.
This is aligned with the discussion in Sec.~\ref{S:rfi_joint} and illustrates the benefit of exploiting the common structure.
In addition, algorithms accounting for the stationary of $\bbY$ outperform the non-stationary alternatives even though we only have $M=15$ signals to estimate the covariance $\hbC_\bby$.

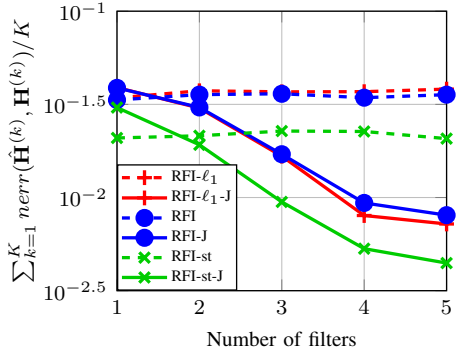
\begin{figure}[]
	\centering
	\centering
		\begin{tikzpicture}[baseline,scale=1]

\begin{semilogyaxis}[
    width=0.32\textwidth,
    height=5.3cm,
    xlabel={Number of filters},
    xmin={1},
    xmax={5},
    ylabel={$\sum_{k=1}^K nerr(\hbH^{(k)},\bbH^{(k)})/K$},
    ymin={10^-2.5},
    ymax={1e-1},
    ytick={10^-2.5, 1e-2, 10^-1.5, .1},
    grid=major,
    legend style={
        at={(0,0)},
        anchor=south west},
    ]
    
    \pgfplotstableread{data/5-Joint-RFI.csv}\timetable
    
    \addplot[red, dashed, mark=+] table [x={Number of filters}, y=RFI-iter] {\timetable};
    \addplot[red, mark=+] table [x={Number of filters}, y=RFI-iter-J] {\timetable};
    \addplot[blue, dashed, mark=*] table [x={Number of filters}, y=RFI-iter-rew] {\timetable};
    \addplot[blue, mark=*] table [x={Number of filters}, y=RFI-iter-rew-J] {\timetable};
    \addplot[green!80!black, dashed, mark=x] table [x={Number of filters}, y=RFI-iter-Cyest] {\timetable};
    \addplot[green!80!black, mark=x] table [x={Number of filters}, y=RFI-iter-Cyest-J] {\timetable};

    
    \legend{RFI-$\ell_1$, RFI-$\ell_1$-J, RFI, RFI-J, RFI-st, RFI-st-J}
    
\end{semilogyaxis}
\end{tikzpicture}
			\vspace{-0.15cm}
	\caption{Error performance when estimating $K$ GFs using the separate and joint approach for different values of $K$. 
	}\label{F:exp_jointfi}
\end{figure}

\subsection{Real-world datasets}
To close the numerical evaluation, we test our robust GF identification algorithms over two real-world datasets.

\begin{figure}[]
	\centering
	\centering
	\begin{tikzpicture}[baseline,scale=1]

\begin{semilogyaxis}[
    width=0.32\textwidth,
    height=5.3cm,
    xlabel={Time horizon used for prediction},
    xmin={1},
    xmax={5},
    xtick={1, 2, 3, 4, 5},
    xticklabels={1, 2, 3, 4, 5},
    ylabel={$\sum_{\kappa=1}^M nerr(\hby_\kappa,\bby_\kappa) / M$},
    ymin={1e-2},
    ymax={0.1},
    ytick={1e-2, 10^-1.5, .1},
    grid=major,
    legend style={
        at={(1,0)},
        anchor=south east},
    legend columns=2,
    ]
    
    \pgfplotstableread{data/airQuality-NSteps.csv}\errNSteps
    
    \addplot[green!80!black, dotted, mark=+] table [x=N-Steps, y=LS-Perfect-(LB)] {\errNSteps};
    \addplot[blue, mark=*] table [x=N-Steps, y=Least-Squares] {\errNSteps};
    \addplot[red, mark=+] table [x=N-Steps, y=Copy-Prev-Day] {\errNSteps};
    \addplot[orange, mark=+] table [x=N-Steps, y=TLS-SEM] {\errNSteps};
    \addplot[purple, mark=*] table [x=N-Steps, y=Least-Squares-GF] {\errNSteps};
    \addplot[cyan, mark=x] table [x=N-Steps, y=RGFI] {\errNSteps};
    \addplot[black, mark=x] table [x=N-Steps, y=VAR-RGFI] {\errNSteps};
    
    \legend{{LS-Eval (LB)},LS,Copy-Prev-Day,TLS-SEM,LS-GF,RFI,AR(3)-RFI}
    
\end{semilogyaxis}
\end{tikzpicture}
	\vspace{-0.15cm}
	\caption{Performance of the algorithms predicting ozone levels in the AirData station network, as the time horizon of the prediction increases.} \label{F:exp_airQual}
\end{figure}
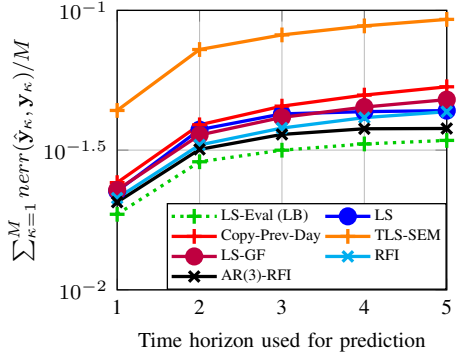

\begin{table}[!t]
	\centering

\begin{tabular}{c||c|c||c|c|}
\multirow{2}{*}{Models} & \multicolumn{2}{c}{1-Step} & \multicolumn{2}{c}{3-Step} \\
& TTS=0.25 & TTS = 0.5 & TTS=0.25 & TTS = 0.5 \\ \hline
\!\!LS & $6.9 \cdot 10^{-3}$ & $3.1 \cdot 10^{-3}$ & $2.1 \cdot 10^{-2}$ & $9.1 \cdot 10^{-3}$ \\
\!\!LS-GF & $3.3 \cdot 10^{-3}$ & $3.3 \cdot 10^{-3}$ & $8.4 \cdot 10^{-3}$ & $8.5 \cdot 10^{-3}$ \\
\!\!TLS-SEM & $4.0 \cdot 10^{1}$ & $3.7 \cdot 10^{-2}$ & $6.8 \cdot 10^{-1}$ & $5.5 \cdot 10^{-2}$ \\
\!\!RFI & $3.4 \cdot 10^{-3}$ & $3.1 \cdot 10^{-3}$ & $8.5 \cdot 10^{-3}$ & $7.5 \cdot 10^{-3}$ \\
\!\!AR(3)-RFI & \textbf{$3.2 \cdot 10^{-3}$} & \textbf{$2.8 \cdot 10^{-3}$} & \textbf{$7.8 \cdot 10^{-3}$} & \textbf{$6.9 \cdot 10^{-3}$} \\
\end{tabular}
	\caption{Performance of the algorithms in predicting the temperature for 2 prediction horizons (1 and 3) and 2 values (25\% and 50\%) of train-test split (TTS). The metrics shown are the average of the normalized error at each timestep $\frac{1}{M} \sum_{\kappa=1}^M nerr(\hby_{\kappa},\bby_{\kappa})$ for all samples.} \label{T:tempData}
    \vspace{0.1cm}
\end{table}

\vspace{2mm}
\noindent\textbf{Weather station network.} This test case evaluates the ability of our algorithms to predict the temperature measured by a network of stations using the data from previous days. The data comes from the ``Global Summary of the Day'' dataset of the National Centers for Environmental Information\footnote{\url{https://www.ncei.noaa.gov/data/global-summary-of-the-day/archive/}} and we used daily temperature measurements from $N=17$ stations in California during 2017 \& 2018. Specifically, with $\bby_\kappa \in \reals^N$ collecting the measurements of the 17 stations at day $\kappa$, we consider an AR model without exogenous inputs, so that $\bby_\kappa \approx \sum_{k=1}^K \bbH_k \bby_{\kappa-k}$. The data samples were divided into two subsets, the first one (training) was used to obtain the GFs $\bbH_k$ and the second one (evaluation) was used to assess the performance and the generalization power of the GFs obtained. Also, the data is normalized so that the signal at each station for all time samples has unitary norm.

The underlying $\ccalG$ was constructed as the unweighted 5-nearest neighbors graph, using the geographical distance between stations. Since temperature relations across stations are likely to be due to a range of factors (including, e.g., altitude), the considered adjacency (based only on geographical positions) may be imperfect, rendering our robust algorithms better suited for this task.

The estimation performance of the different algorithms is shown in Table~\ref{T:tempData}. Since in this case the ground-truth GF is not known, we use the signal denoising error $nerr(\bby_{\kappa},\hby_{\kappa})$ to assess the quality of the schemes. In this specific experiment, the error is measured over all samples (both training and test subsets), to see a clear downward trend when increasing the number of training samples, or equivalently, the train-test split (TTS) value.
The algorithms evaluated are ``LS'', ``LS-GF'' (which postulates a GF with coefficients $\hbh = \argmin_\bbh \| \bbY - \sum_\ell h_\ell \bbS^\ell \bbX \|_F^2$), ``TLS-SEM'', ``RFI'' (which assumes an AR(1) process) and ``AR(3)-RFI''. Two values of TTS (0.25 and 0.50) and two prediction horizons (1 and 3) are considered. 
The main observation is that ``AR(3)-RFI'' yields the best performance in all settings. Additionally, the results for TTS=0.25 demonstrate the benefits of considering the underlying graph in the low-sample regime, since even ``LS-GF'', which relies on the  imperfect $\bar{\bbS}$, outperforms ``LS''. On the other hand, ``LS-GF'' does not seem to improve its prediction as TTS increases, while our two algorithms yield a lower prediction error.

\noindent\textbf{Air quality station network.}
We consider an experimental setup (AR model, graph creation method...) similar to that for the weather station data but, in this case, we use 2018 \& 2019 data from the United States Environmental Protection Agency\footnote{\url{https://www.epa.gov/outdoor-air-quality-data}} to predict the ozone levels in a network of 17 outdoor stations in California.
The stations chosen were those with at least 330 measurements each year for a selection of pollutants, and missing data was filled via first-order interpolation.

The goal here is to analyze how the prediction horizon affects the prediction error.
The value of TTS chosen was 0.5, i.e. evaluation data represented 50\% of the samples.
Fig.~\ref{F:exp_airQual} shows the performance of the algorithms when predicting ozone levels.
As a baseline, ``LS-Eval-(LB)'' shows the error measured on the evaluation data when obtaining the GF also using evaluation data, therefore representing a lower bound for the LS error using AR models of order 1. Also, ``Copy-Prev-Day'' represents the error obtained by the ``identity GF'', which copies the previous day's measurement.
As in the previous example, the best performing algorithm is ``AR(3)-RFI'', whose performance is close to the baseline, followed by ``RFI''.

%
\section{Concluding remarks}\label{S:conclusion}
This paper put forth a framework dealing with estimation problems in GSP where the information about (the links of) the supporting graph is uncertain. 
Specifically, we addressed the problem of estimating a GF (i.e., a polynomial of the GSO) from input and output graph signals under the key assumption that only a perturbed version of the true GSO was available. 
In contrast to the majority of existing approaches that operate on the spectral domain, we recast the true graph as an additional estimation variable and formulated an optimization problem that \emph{jointly} estimated the GF and the true (unknown) GSO. We focused first on the case where only one GF needed to be estimated and, then, shifted to (multi-feature and AR graph signal) setups where multiple GFs have to be jointly identified. 
The formulated optimizations operated completely in the vertex domain and bypassed the problem of computing high-order polynomials, avoiding the challenges of dealing with the influence of perturbations in the graph spectrum as well as the numerical instability and error propagation associated with  high-order matrix polynomials. While non-convex, upon blending techniques from alternating optimization and MM, the proposed algorithm was shown to be capable to find a stationary point in polynomial time. This algorithm was later modified so that the scaling of the computational complexity with respect to the number of nodes in the graph is reduced. Future work includes delving into the robust estimation of ARMA time-varying graph signals, consideration of additional graph perturbation models, and application of our robust estimation framework to other GSP problems, to name a few.

\section*{Appendix A: Proof of Th.~\ref{thm1}}\label{A:proof_thm1}
The proof relies on the results presented in \cite[Th. 1b]{hong2015unified}, so it suffices to show that our formulation and algorithm fulfill the required conditions in \cite{hong2015unified}. To that end, recall that $f(\bbz)$ is the objective function in \eqref{eq:rfi_nonconvex_rew}, and let $\bbz_1:=\vvec(\bbH)$ and $\bbz_2:=\vvec(\bbS)$ denote the $B=2$ blocks of variables considered in our algorithm.
Moreover, at each step, the function $f(\bbz)$ is approximated by $u_1(\bbz_1)$ and $u_2(\bbz_2)$, corresponding to the objective functions in \eqref{eq:step1_filterid} and \eqref{eq:step2_graph_denoising}.
Then, to ensure the convergence of our iterative algorithm the following conditions are required.

\noindent\textit{(\textbf{C1}) Each function $u_b(\bbz_b)$ must be a global upper bound of $f(\bbz)$ and the first-order behavior of $u_b(\bbz_b)$ and $f(\bbz)$ must be the same.}

\noindent\textit{(\textbf{C2}) $f(\bbz)$ must be regular (cf. \cite{hong2015unified}) at every point in $\ccalZ^*$.} 

\noindent \textit{(\textbf{C3}) The level set $\ccalZ^{(0)} = \{\bbz \; | \; f(\bbz) \leq f(\bbz^{(0)}) \}$ is compact.}

\noindent\textit{(\textbf{C4}) At least one of the problems in \eqref{eq:step1_filterid} and \eqref{eq:step2_graph_denoising} must have a unique solution.}

\noindent
Next, we address each of the four conditions separately, proving that our approach satisfies all of them.

Condition (\textbf{\textit{C1}}) requires the surrogate functions $u_b(\bbz_b)$ to be global upper bounds of $f(\bbz)$.
For the first block ($b=1$), it is easy to see that $u_1(\bbz_1)=f(\bbz)$ when the block $\bbz_2$ remains constant, so it satisfies the requirements.
Regarding $u_2(\bbz_2)$, we approximate $f(\bbz)$ with the first-order Taylor series of the logarithmic penalty.
Because the $\log$ is a concave differentiable function, it follows that its Taylor series of order one constitutes a global upper bound. 
Moreover, because $u_2(\bbz_2)$ is a first-order Taylor series approximation of $f(\bbz)$, it also follows that the first-order behavior of $f(\bbz)$ and $u_2(\bbz_2)$ is the same.
Therefore, $u_2$ also satisfies the requirement, and hence, (\textbf{\textit{C1}}) is fulfilled.

To prove (\textit{\textbf{C2}}), according to \cite{hong2015unified}, a function $f(\bbz)$ is regular if its non-smooth components are separable across the different blocks of variables.
To show this, we decompose $f$ as $f = g_A+g_B$, with functions $g_A$ and $g_B$ being defined as

\begin{align}
    &g_A(\bbH, \bbS) = \|\bbY - \bbH \bbX\|_F^2 + \gamma \|\bbH \bbS - \bbS \bbH\|_F^2, \nonumber \\
    &g_B(\bbS) = \lambda \sum_{i,j=1}^N\log(|S_{ij}\!-\!\bar{S}_{ij}|+\delta_2) + \beta\sum_{i,j=1}^N\log(|S_{ij}|+\delta_1). \nonumber
\end{align}
Since $g_A$ is a smooth function and the non-smooth function $g_B$ only depends on the variables on the second block, $\bbz_2=\vvec(\bbS)$, it follows that $f(\bbz)$ is a regular function for all feasible points.

Next, we show that the level set $\ccalZ^{(0)} = \{\bbz \; | \; f(\bbz) \leq f(\bbz^{(0)}) \}$ is compact as required by (\textit{\textbf{C3}}).
We start by noting that the entries of $\bbS$ are continuous subsets of $\reals$, (e.g., $S_{ij}\in\reals_+$ when $\bbS=\bbA$), and that $\bbH\in\reals^{N\times N}$, so $f(\bbz)$ is continuous.
Moreover, $f(\bbz)\leq f(\bbz^{(0)})$ implies that the functions $\|\bbY-\bbH\bbX\|_F^2$ and $\log(|S_{ij}|+\delta_1)$ are all bounded, rendering the domain of $f(\bbz)$ bounded. 
It follows then that the level set $\ccalZ^{(0)}$ is compact.

Finally, we need to prove that either \eqref{eq:step1_filterid} or \eqref{eq:step2_graph_denoising} has a unique solution, so that (\textit{\textbf{C4}}) is fulfilled. Prop. \ref{thm3} (see below) states that, under the two conditions required by Th. \ref{thm1} (i.e., $\bbS$ does not have repeated eigenvalues, and the graph signals $\bbX$ excite every graph frequency), the solution to \eqref{eq:step1_filterid} is unique. This confirms that (\textit{\textbf{C4}}) is satisfied, concluding the proof. 

\begin{proposition}\label{thm3}
Let $\bbH\in\reals^{N\times N}$, $\bbS=\bbV\diag(\bblambda)\bbV^{-1}\in\reals^{N\times N}$, and $\bbX\in\reals^{N\times M}$ be the GF, the GSO, and the input signals in \eqref{eq:step1_filterid}. Then, \eqref{eq:step1_filterid} has a unique solution w.r.t. $\bbH$ if the following conditions are satisfied:
\begin{enumerate}
	\item $\lambda_i \neq \lambda_{i'}$, for all $i\neq i'$ and $(i,i')\in\{1,...,N\}^2$.
	\item Every row of $\tbX = \bbV^{-1} \bbX$ has at least one non-zero entry. 
\end{enumerate}
\end{proposition}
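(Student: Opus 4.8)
The plan is to read \eqref{eq:step1_filterid} as a convex quadratic program in the vectorized variable $\bbh := \vvec(\bbH)$ and to prove uniqueness by showing that the Hessian of the objective is positive definite exactly when conditions 1) and 2) hold. Using the identities $\vvec(\bbH\bbX) = (\bbX^\top \otimes \bbI)\bbh$ and $\vvec(\bbS\bbH - \bbH\bbS) = \bbM\bbh$ with $\bbM := \bbI \otimes \bbS - \bbS^\top \otimes \bbI$, the objective becomes $\|\vvec(\bbY) - (\bbX^\top \otimes \bbI)\bbh\|_2^2 + \gamma\|\bbM\bbh\|_2^2$, whose Hessian is (up to the factor $2$) the matrix $\bbX\bbX^\top \otimes \bbI + \gamma\bbM^\top\bbM$ appearing in the closed form \eqref{eq:step1_closed_form}. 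Since both summands are positive semidefinite, the objective is strictly convex --- hence has a unique minimizer --- if and only if the null spaces of $\bbX^\top \otimes \bbI$ and of $\bbM$ intersect only at the origin. Translating back, this amounts to showing that the only $\bbH$ satisfying both $\bbH\bbX = \bbzero$ and $\bbS\bbH = \bbH\bbS$ is $\bbH = \bbzero$.

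The core of the argument then exploits the two hypotheses in sequence. First I would invoke condition 1): because $\bbS = \bbV\diag(\bblambda)\bbV^{-1}$ has $N$ distinct eigenvalues, each eigenspace is one-dimensional, so any $\bbH$ commuting with $\bbS$ must leave every eigenspace invariant and is therefore simultaneously diagonalizable, $\bbH = \bbV\bbD\bbV^{-1}$ for some diagonal $\bbD = \diag(d_1, \ldots, d_N)$. Substituting this form into $\bbH\bbX = \bbzero$ and cancelling the invertible $\bbV$ yields the equivalent condition $\bbD\tbX = \bbzero$, where $\tbX = \bbV^{-1}\bbX$ is the frequency representation of the input. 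Entrywise, $(\bbD\tbX)_{im} = d_i\,\tbX_{im}$, so the $i$-th row of $\bbD\tbX$ vanishes iff $d_i$ multiplies the entire $i$-th row of $\tbX$ to zero.

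At this point condition 2) closes the argument: since every row of $\tbX$ contains at least one nonzero entry, $d_i\,\tbX_{im} = 0$ for all $m$ forces $d_i = 0$ for each $i$, whence $\bbD = \bbzero$ and $\bbH = \bbzero$. This shows the null spaces meet only trivially, so the Hessian is positive definite and the minimizer of \eqref{eq:step1_filterid} is unique.

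The step I expect to be the main (though fairly standard) obstacle is the commutant characterization: carefully justifying that distinct eigenvalues force any commuting $\bbH$ to take the form $\bbV\bbD\bbV^{-1}$, and verifying that the commutator term vectorizes exactly to $\bbM^\top\bbM = \bbS\bbS^\top \oplus \bbS^\top\bbS - \bbS^\top \otimes \bbS^\top - \bbS \otimes \bbS$ so that the Hessian matches \eqref{eq:step1_closed_form}. Everything else is routine convex-quadratic bookkeeping. It is worth noting that the two conditions are exactly what is needed: dropping the distinct-eigenvalue hypothesis would admit non-diagonal commutants, while allowing an all-zero row in $\tbX$ would leave the corresponding $d_i$ unconstrained --- in either case breaking uniqueness.
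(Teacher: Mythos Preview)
Your argument is correct and follows the same overall logic as the paper: both vectorize \eqref{eq:step1_filterid}, observe that uniqueness is equivalent to the null spaces of the commutator term and the data term meeting only at the origin, and then use condition~1) to pin down the first null space and condition~2) to rule out any nonzero intersection.

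The only noteworthy difference is in how the commutator null space is handled. The paper stays in the vectorized domain: it diagonalizes $\bbD=\bbI\otimes\bbS-\bbS^\top\otimes\bbI$ explicitly, identifies $\ccalN(\bbD)$ as the column span of $(\bbV^{-1})^\top\odot\bbV$, and then pushes $\bbE\bbn$ through a Khatri--Rao identity to obtain $\bbV(\bbtheta\circ\tbx)$. You instead translate back to matrices and invoke the classical commutant fact that a matrix commuting with one having simple spectrum must be simultaneously diagonal, $\bbH=\bbV\diag(\bbd)\bbV^{-1}$, after which $\bbH\bbX=\bbzero$ becomes $\diag(\bbd)\tbX=\bbzero$ and condition~2) finishes immediately. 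Your route is a bit more elementary and sidesteps the Kronecker/Khatri--Rao bookkeeping; the paper's route has the minor advantage of making the dimension count $\dim\ccalN(\bbD)=N$ explicit and connecting directly to the matrix in \eqref{eq:step1_closed_form}. Substantively they are the same proof.
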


\begin{proof}
To simplify exposition, we focus first on the (most restrictive) setup of having only $M=1$ input-output pairs. Defining $\hbh:=\vvec(\bbH)$, we can reformulate \eqref{eq:step1_filterid} as
\begin{equation}
	\text{min}_{\hbh \in \reals^{N^2}} \gamma\| (\bbI \otimes \bbS - \bbS^\top  \otimes \bbI) \hbh \|_2^2 + \| \bby - (\bbx^\top \otimes \bbI) \hbh \|_2^2,
	\label{eq:step1_filterid_vec}
\end{equation}
where lowercase symbols $\bby$ and $\bbx$ are used to emphasize that the output and input signals are a single $N$-dimensional vector. %
Upon defining $\bbD:= \bbI \otimes \bbS - \bbS^\top \otimes \bbI$, and $\bbE := \bbx^\top \otimes \bbI$, solving \eqref{eq:step1_filterid_vec} is equivalent to solving
\begin{equation}\label{eq:step1_filterid_single_ls_reformulation}
	\text{min}_{\hbh \in \reals^{N^2}} \Big\| \begin{bmatrix}\bbzero_{N^2} \\ \bby~\end{bmatrix} - \bbF \hbh \Big\|_2^2~\text{with}~\bbF:=\begin{bmatrix}\gamma \bbD \\ ~\bbE\end{bmatrix}
 \end{equation}
To prove that~\eqref{eq:step1_filterid_single_ls_reformulation} has a unique solution, it suffices to show that $\bbF$ is full column rank, i.e. $\nexists \; \bbn \in \reals^{N^2} $ such that $\bbF \bbn = \bbzero_{N+N^2}$. To show this, we first identify $\ccalN(\bbD)$, the null space of $\bbD$, and then show that $\bbE \bbn \neq \bbzero_N \; \forall \; \bbn \in \ccalN(\bbD)\setminus \{\bbzero_{N^2} \}$.

We start with the characterization of $\ccalN(\bbD)$. Given the Kronecker structure of $\bbD$, each of its $N^2$ eigenvalues has the form $\lambda_k - \lambda_{k'}$, with $(\bbV^{-1})^\top \otimes \bbV$ being the associated eigenvectors. Leveraging that $\lambda_i\neq \lambda_{i'}$ for $i\neq i'$, it follows that only when $i=i'$ the eigenvalue of $\bbD$ is zero. As a result, $\rank (\bbD) = N^2 - N$ and $\text{dim} (\ccalN(\bbD)) = N$. Equally important, the $N$ eigenvectors associated with the $N$ zero eigenvalues are given by $(\bbV^{-1})^\top \odot \bbV$, which, as a result, constitutes a basis spanning $\ccalN(\bbD)$. More formally, we concluded that $\ccalN(\bbD) = \{ ((\bbV^{-1})^\top \odot \bbV) \bbtheta \;| \forallsymb \; \bbtheta \in \reals^{N}\}$. 

Thus, to show that $\bbF$ in~\eqref{eq:step1_filterid_single_ls_reformulation} is full column rank we just need to prove that the only element $\bbn\in\ccalN(\bbD)$ that renders $\bbE \bbn = \bbzero_N$ is the all-zero vector $\bbzero_{N^2}$. To do so, we leverage the characterization of $\ccalN(\bbD)$ and write $\bbE \bbn$ as
\begin{align}
	\nonumber \bbE \bbn &= (\bbx^\top \otimes \bbI) ((\bbV^{-1})^\top \odot \bbV) \bbtheta 
	= (\bbx^\top (\bbV^{-1})^\top \odot \bbV) \bbtheta \\
	\nonumber &= \bbV \text{diag}(\bbtheta) (\bbx^\top (\bbV^{-1})^\top)^\top 
	\nonumber = \bbV \text{diag}(\bbtheta)  \bbV^{-1} \bbx \\
	&= \bbV \text{diag}(\bbtheta) \tbx
	= \bbV (\bbtheta \circ \tbx), \label{eq:Dtimesn_cannotbezero}
\end{align}
where we used the property $(a \otimes b) (c \odot d) = a c \odot b d$.
Since $\bbV$ is invertible, the first and last terms in \eqref{eq:Dtimesn_cannotbezero} demonstrate that $\bbE \bbn=\bbzero_N$ requires $\bbtheta \circ \tbx=\bbzero_N$. However, condition 2) in Prop.~\ref{thm3} states that $\tilde{x}_i \neq 0 ~\forall~ i$; hence, $\bbtheta \circ \tbx=\bbzero_N$ requires $\bbtheta=\bbzero_N$.  This implies that the only element in $\ccalN(\bbD)$ that renders $\bbE \bbn = \bbzero_N$ is $\bbn=(\bbV^{-1})^\top \odot \bbV) \bbzero_N =\bbzero_{N^2}$, concluding the proof.

The proof can be generalized for $M > 1$. In that case, the matrix $\bbE$ has size $MN\times N^2$ and the counterpart to \eqref{eq:Dtimesn_cannotbezero} establishes that having $\bbE \bbn=\bbzero$ requires $\vvec(\text{diag}(\bbtheta) \tbX)= \bbzero_{MN}$. Since Prop.~\ref{thm3} assumes that each row of $\tbX$ has at least one nonzero entry, it follows that $\bbtheta=\bbzero_{MN}$, concluding the proof. 
\end{proof}

\bibliographystyle{IEEEtran.bst}

\bibliography{bibliography}

\end{document}